\documentclass[sigplan,10pt]{acmart}

\usepackage{lineno}
\usepackage{subcaption}
\usepackage{enumitem}
\usepackage{makecell}
\usepackage{multirow}
\usepackage{booktabs}
\usepackage{listings}
\usepackage{tabularx}
\usepackage{amsthm}

\newtheorem{theorem}{Theorem}

\newif\ifdraft
\newcommand{\sk}[1]{\ifdraft{\noindent{\textcolor{blue}{\bf \fbox{SL} {\it#1}}}}\fi}

\newcommand{\kr}[1]{\ifdraft{\noindent{\textcolor{orange}{\bf \fbox{KR} {\it#1}}}}\fi}

\renewcommand\footnotetextcopyrightpermission[1]{} 

\setcopyright{none}

\acmDOI{}

\acmISBN{}

\acmPrice{}

\begin{document}

\title{Libra: Accelerating Socket I/O via Programmable Selective Data Copying}

\author{Kairui Zhou}
\affiliation{%
  \institution{Shanghai Jiao Tong University}
  \country{China}
}
\email{kairuizhou@sjtu.edu.cn}

\author{Shengkai Lin}
\affiliation{%
  \institution{Shanghai Jiao Tong University}
  \country{China}
}
\email{jefflin@sjtu.edu.cn}

\author{Wei Zhang}
\affiliation{%
  \institution{University of Connecticut}
  \country{USA}
}
\email{wei.13.zhang@uconn.edu}

\author{Shizhen Zhao}
\affiliation{%
  \institution{Shanghai Jiao Tong University}
  \country{China}
}
\email{shizhenzhao@sjtu.edu.cn}

\renewcommand{\shortauthors}{K. Zhou et al.}

\begin{abstract}




Layer-7 (L7) proxies are critical to modern cloud-native systems, yet their performance is increasingly bottlenecked by copying entire payloads across the kernel-user boundary. Existing approaches reduce this overhead but typically sacrifice compatibility with unmodified POSIX applications, introduce new APIs, or require specialized environments. We show that, under conventional OS abstractions, fully eliminating kernel-user copies while preserving standard socket semantics for unmodified proxies is fundamentally impossible. This leads to a practical insight: in common L7 workloads, proxies inspect only small metadata (e.g., HTTP headers) for routing, while forwarding the bulk payload unchanged. Based on this insight, we present Libra, an OS-level selective-copy framework that copies only metadata to the user space and retains the bulk payload in the kernel for forwarding, reducing data movement without breaking compatibility. Libra uses eBPF to identify protocol-specific metadata boundaries and coordinate selective copy and payload reuse across receive and transmit paths, all without modifying the socket API. Implemented in Linux and evaluated with unmodified Nginx and HAProxy, Libra improves plaintext throughput by up to 4.2$\times$ and reduces P99 tail latency by over 90\%. With hardware-offloaded kTLS, it boosts encrypted throughput by 2.0$\times$ and cuts tail latency by 65\%.

\end{abstract}

\maketitle
\pagestyle{plain}

\section{Introduction}
\label{sec:intro}

Layer-7 (L7) proxies are foundational components of modern cloud environments, serving as intermediaries that inspect application-layer data—such as HTTP headers, URLs, or methods—to make routing, load-balancing, and access control decisions~\cite{aws-alb}. Widely deployed as API gateways, ingress controllers, and service mesh sidecars, they act as middlemen that form the critical communication backbone of cloud-native platforms. Today, these tasks are powered by highly optimized production-grade implementations such as Nginx~\cite{nginx}, Envoy~\cite{envoy}, HAProxy~\cite{haproxy}, and Traefik~\cite{traefik}.

In cloud computing environments, Layer-7 (L7) proxies handle a massive number of concurrent requests~\cite{hermes}. The entire payload of each request must be copied between kernel and user space, incurring extremely high CPU overhead. As our empirical profiling of production-grade proxies shows (\S\ref{sec:copies_in_modern_l7_proxies}), memory copying across the kernel-user boundary for both ingress and egress traffic combined can consume up to 55.6\% of total CPU cycles \sk{For ingress or egress?}, making it the dominant performance bottleneck~\cite{copier, zio, mallacc, kanev2015profiling, gonzalez2023profiling}.

To alleviate the severe performance bottlenecks of L7 proxies, several targeted optimizations have been proposed~\cite{tcp_splice, prism, qdsr, miresga}. Broadening this scope, general-purpose optimizations have also been developed across various layers of the system stack to reduce data movement overheads~\cite{arrakis, IX, zerocopy-recv, zerocopy-send, freebsd, solaris, socksdirect, copier}. 

To evaluate their limitations, we consider two key properties: \textit{zero-copy}, where once data has been placed in host memory, neither \texttt{recv} nor \texttt{send} performs any memory-to-memory data movement, and \textit{POSIX compatibility}, i.e., support for unmodified applications through standard POSIX APIs. A precise definition of both terms is provided in ~\S\ref{sec:proof}.

We summarize these existing optimizations based on these two key criteria. As shown in Table~\ref{tab:related_work}, prior approaches fall into several categories, such as new syscalls, dedicated proxies, pre-registered memory or memory remapping. 

Although these optimizations reduce data copying, none of them simultaneously achieves both zero-copy and full POSIX compatibility.
In fact, as we formally demonstrate in~\S\ref{sec:proof}, these goals are inherently conflicting under today’s mainstream OS architecture, therefore, full POSIX compatibility inevitably requires data copying.

\begin{table*}[htbp]
\centering
\resizebox{\textwidth}{!}{
\begin{tabular}{ll c cccc}
\toprule
\multirow{2}{*}{\textbf{Category}} & \multirow{2}{*}{\textbf{Representative Works}} & \multirow{2}{*}{\textbf{Zero-Copy}} & \multicolumn{3}{c}{\textbf{POSIX Compatibility}} \\
\cmidrule(lr){4-6}
& & & \textbf{Standard API} & \textbf{Flexible Buffering\textsuperscript{\ddag}} & \textbf{No Logic Modification\textsuperscript{\dag}} \\
\midrule
New syscall & Copier~\cite{copier}, TCP Splice~\cite{tcp_splice} & $\times$ & $\times$ & $\checkmark$ & $\checkmark$ \\
New proxy & Prism~\cite{prism}, QDSR~\cite{qdsr}, MireSGA~\cite{miresga} & $\times$ & $\checkmark$ & $\checkmark$ & $\times$ \\
Pre-registered memory & Arrakis~\cite{arrakis}, standard RDMA~\cite{rdma} & $\checkmark$ & $\times$ & $\times$ & $\checkmark$ \\
Memory remapping & Zero-copy socket~\cite{zerocopy-recv, zerocopy-send},
Solaris~\cite{solaris}, FreeBSD~\cite{freebsd}, SocksDirect~\cite{socksdirect} & $\checkmark$ & $\checkmark$ & $\times$ & $\checkmark$ \\
Specific Work & IX~\cite{IX} & $\checkmark$ & $\times$ & $\checkmark$ & $\checkmark$ \\

\textbf{Selective Copy} & \textbf{Libra} & $\times$ & $\checkmark$ & $\checkmark$ & $\checkmark$ \\
\bottomrule
\end{tabular}
}
\caption{Comparison of existing optimizations.}
\label{tab:related_work}

\noindent\raggedright
\textsuperscript{\ddag}\footnotesize
``Flexible Buffering'' allows applications to allocate arbitrarily sized buffers at any time, without pre-registration or alignment constraints.

\textsuperscript{\dag}\footnotesize
``No Logic Modification'' refers to whether the proxy preserves the standard forwarding semantics: receive request $\rightarrow$ forward request $\rightarrow$ receive reply $\rightarrow$ forward reply. 
``New Proxy'' designs break this model by forwarding only metadata and terminating the proxy's involvement early (e.g., receive request $\rightarrow$ forward request $\rightarrow$ termination), which constitutes a logical change in the proxy's operation—even if the application uses standard POSIX APIs.

\end{table*}

\sk{Add a table to better describe the meaning of zero-copy and compatibility, and related works. And the description here should be longer.}

Fortunately, we observe a critical behavioral trait of modern L7 proxies: they typically require only a tiny fraction of the data (the metadata, such as HTTP headers) to make routing and load-balancing decisions, while the massive bulk payload is never parsed or modified. This insight shifts our fundamental optimization paradigm. If we cannot avoid copying entirely, the logical solution is to \textbf{selectively copy} only the necessary metadata to the user space, while bypassing the redundant copy for the opaque payload. Instead, the bulk data is anchored in the kernel socket buffer and can be directly reused for subsequent transmission.

However, realizing this selective copy approach entails several key challenges:
(1) How can the kernel distinguish L7 metadata, selectively copy only the metadata and anchor the bulk payload?
(2) How can the kernel know when to identify metadata and when to skip it to synchronize with the proxy\footnote{L7 proxies operate as protocol-aware state machines: they parse metadata (e.g., HTTP headers) to determine the payload length, forward exactly that many bytes, and then await a new header—requiring the kernel to track this state to avoid misinterpreting payload as metadata.}?
(3) How can the anchored payload be safely retrieved for further transmission? 

\textbf{Our approach: Libra.}
We present \textit{Libra}, a near zero-copy framework that addresses these challenges by leveraging the OS principle of separating mechanism from policy: the kernel provides a safe, universal mechanism for selective metadata extraction and in-kernel payload anchoring, while lightweight eBPF programs—supplied by users—encode protocol-specific logic to separate L7 metadata needed for control decisions from bulk data-plane payloads. This design does not require application modifications or introduce special infrastructure requirements, while preserving standard POSIX socket semantics. (Details in \S\ref{sec:design}.)

We make the following contributions:

\hangindent=2em\hangafter=1 1. We establish that zero-copy and POSIX compatibility are fundamentally incompatible in mainstream operating systems. Given this constraint, we propose a paradigm shift from rigid full-payload bypassing to \textit{selective copy}. By accepting a negligible copy overhead for lightweight L7 metadata while safely bypassing the opaque bulk payload, Libra achieves near zero-copy performance—without requiring application modifications or hardware-specific constraints. 

\hangindent=2em\hangafter=1 2. We design and implement Libra, an OS-level framework that leverages eBPF to enable user-defined, protocol-aware L7 parsing logic in the kernel, while data forwarding is handled transparently by kernel functions. This ensures compatibility with unmodified proxy applications for variable L7 protocols.

\hangindent=2em\hangafter=1 3. We demonstrate performance gains for L7 proxies while preserving full application transparency and avoiding specialized infrastructure. Using standard network configurations, Libra accelerates unmodified Nginx and HAProxy deployments by up to 4.2x in plaintext throughput and reduces P99 tail latency by over 90\% versus the standard Linux stack. With hardware-offloaded kTLS, Libra doubles encrypted throughput and cuts tail latency by 65\%.

\textit{This work does not raise any ethical issues.}
\section{Background and Motivation}
\label{sec:background}

\subsection{Copies in Modern L7 Proxies}
\label{sec:copies_in_modern_l7_proxies}

Modern cloud-native infrastructures rely heavily on Layer-7 (L7) proxies for essential capabilities like request routing and load balancing. To process L7 protocols (e.g., HTTP), these proxies use the standard POSIX socket API. While kernel-bypass solutions (e.g., DPDK~\cite{dpdk}) exist, their operational complexity limits adoption in production~\cite{pegasus, joyride}; thus, production environments predominantly rely on the standard in-kernel network stack to maintain compatibility, which requires copying the entire payload between kernel and user space. This kernel-user data movement has become the primary scalability bottleneck.

To understand the severity and universality of this ``copy tax,'' we used Linux \texttt{perf} and flame graphs~\cite{perf2010,flamegraph} to analyze the proportion of CPU overhead in proxy processes spent on data copying during \texttt{recv} (kernel-to-user) and \texttt{send} (user-to-kernel) operations across four production-grade L7 proxies (Nginx, HAProxy, Traefik, and Envoy) under varying payload sizes (16KB and 256KB).

Figure~\ref{fig:copy_asymmetry} shows significant cross-boundary copy overhead in L7 proxies. In Nginx, copying consumes 15.9\% of CPU cycles for 16\,KB payloads and 55.6\% for 256\,KB. Other proxies also incur substantial overhead: HAProxy (39.4\%), Traefik (33.3\%), and Envoy (24.6\%).




Furthermore, profiling reveals a critical \textbf{Rx/Tx asymmetry}: receive-side copying consumes far more CPU cycles than transmit-side. For example, with 256KB payloads in Nginx, Rx copying accounts for 45.2\% of CPU cycles—over 4$\times$ the Tx overhead (10.4\%). To understand this, we separately profile TLB miss rates on the receive and transmit paths using a single-stream file download setup (client dominated by Rx, server by Tx), as shown in Figure~\ref{fig:tlb_breakdown}. On the Rx path, the NIC DMAs packets into discontiguous page fragments, and \texttt{copy\_to\_user} performs scattered reads across them, defeating prefetching and causing frequent page table walks. This results in a high TLB miss rate ($\sim$0.40\%). In contrast, the Tx path uses \texttt{copy\_from\_user} to pack data into pre-allocated, physically contiguous high-order pages (e.g., 32KB), minimizing page crossings. Consequently, for 512KB payloads, the Rx TLB miss overhead is 10$\times$ higher than Tx.

\textbf{Takeaway \#1:} Kernel--user copying is non-negligible in L7 proxies, and the fragmented Rx TLB miss overhead further exacerbates the bottleneck.

\begin{figure}[t]
    \centering
    \begin{subfigure}[b]{0.57\columnwidth}
        \centering
        \includegraphics[width=\linewidth]{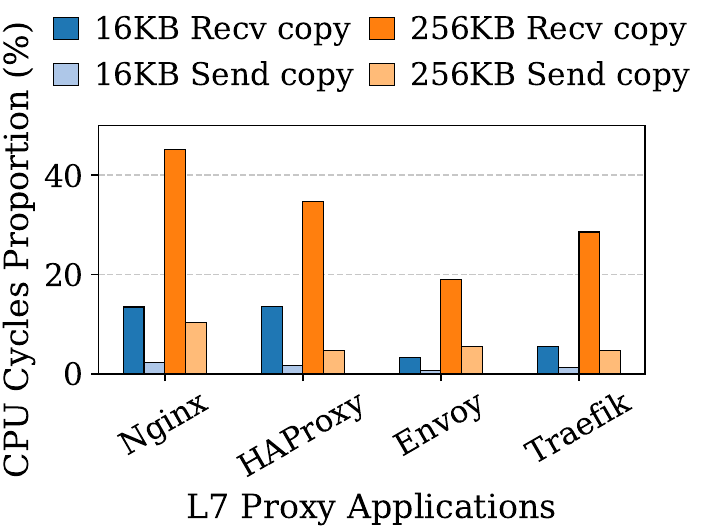}
        \caption{Proportion of total process CPU cycles spent on cross-boundary copies.}
        \label{fig:copy_asymmetry}
    \end{subfigure}
    \hfill
    \begin{subfigure}[b]{0.39\columnwidth}
        \centering
        \includegraphics[width=\linewidth]{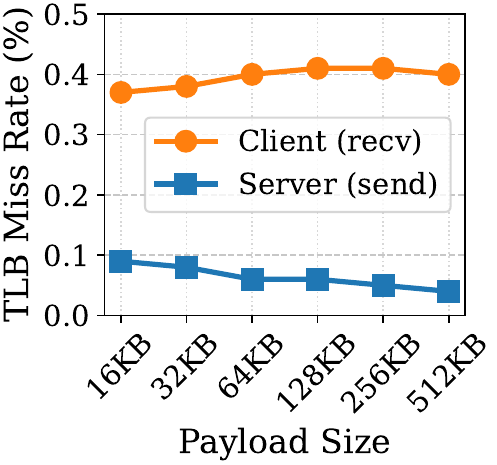}
        \caption{TLB miss rate under \texttt{recv}/\texttt{send} operations.}
        \label{fig:tlb_breakdown}
    \end{subfigure}
    \caption{Analysis of cross-boundary copying overhead and TLB efficiency.}
    \label{fig:combined_plots}
\end{figure}

\subsection{The Conflict between Zero-Copy and POSIX Compatibility}
\label{sec:proof}
As data copy overhead has become a major performance bottleneck, zero-copy mechanisms are highly desired. To formalize this discussion, we define \textit{zero-copy} as a data transfer mechanism where, once data resides in host memory, it is subsequently received or sent by the application without any memory-to-memory data movement
\footnote{
``Data movement'' includes hardware-assisted transfers such as DMA. For DRAM-to-DRAM copies, DMA incurs non-negligible setup overhead and often achieves lower throughput than optimized CPU-based copying, resulting in reduced efficiency~\cite{copier}.}. We further define \textit{POSIX compatibility} as the ability of a system to support unmodified standard applications that are written exclusively using standard POSIX APIs, without requiring any code changes to achieve expected behavior and semantics~\cite{posix}.

However, for mainstream operating systems, achieving zero-copy fundamentally conflicts with maintaining strict POSIX compatibility. In this section, we provide a formal proof to explain why this conflict exists.

\noindent\textbf{Definitions and Desired Properties:}

\noindent\textbf{• Payload ($N$):} The incoming network data payload.

\noindent\textbf{• Virtual Buffer ($V$):} The application-provided virtual memory buffer passed to \texttt{recv}.

\noindent\textbf{• NIC DMA Target ($P_{NIC}$):} The physical memory region written by the NIC via DMA when receiving payload $N$.

\noindent\textbf{• Initial Physical Mapping ($P_{init}$):} The physical page first mapped to virtual buffer $V$.
    
\noindent\textbf{• Standard \texttt{recv} Paradigm:} Upon returning from a \texttt{recv} call, the application's virtual buffer $V$ maps to a physical memory region $P$ such that $P$ contains $N$.
    
\noindent\textbf{• Zero-Copy (G1):} $N$ is DMAed into a unique physical region $P_{NIC}$ and is never duplicated. Therefore, any physical region $P$ containing $N$ must strictly be $P_{NIC}$ ($P \equiv P_{NIC}$).

\begin{figure}[htbp]
    \centering
    \includegraphics[width=0.8\columnwidth]{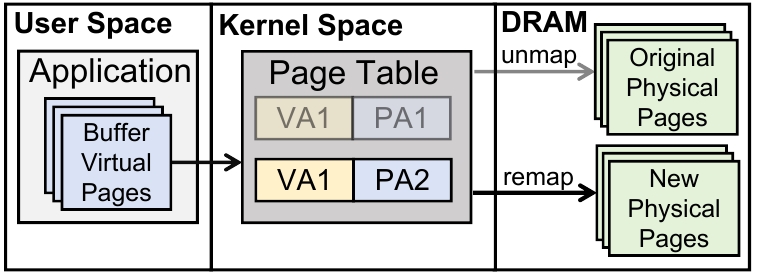}
    \caption{Virtual memory remapping process.}
    \label{fig:remap}
\end{figure}
    
\noindent\textbf{• POSIX-Compatible \texttt{recv} (G2):} The system supports the standard \texttt{recv(void *buf, size\_t len, ...)} interface. Under this paradigm, the buffer \texttt{buf} (virtual buffer $V$) follows the properties of \textit{Flexible Buffering}: it is allocated prior to the \texttt{recv} call, and its length \texttt{len} ($L$) can be set to any positive value by the application developer.

\textbf{Basic Assumptions (System Constraints):}

\noindent\textbf{• A1 (Finite NIC Buffering):} NICs have limited on-chip memory and must DMA incoming packets into \textit{pre-allocated} host physical memory immediately after hardware-level processing.
    
\noindent\textbf{• A2 (Page-Granular Remapping):} Applications access data via virtual addresses, which the OS maps to physical memory pages. The OS can update existing virtual-to-physical mappings only at page granularity. Consequently, if a virtual buffer $V$ is already backed by physical memory $P_{init}$, remapping $V$ to a new physical region $P_{new}$ requires $V$ to be page-aligned
\footnote{Page alignment means the buffer starts exactly at the beginning of a virtual memory page (e.g., at an address that is a multiple of 4 KB), and this requirement is not central to our proof.}
and its length $L$ to be a multiple of the page size; otherwise, modifying the page-table mappings would corrupt data associated with adjacent virtual addresses mapped to the same physical page.

\textbf{Virtual Memory Model.}
Applications access memory through virtual address ranges, which are mapped by the OS to physical memory at page granularity via page tables, as illustrated in Figure~\ref{fig:remap}. 
A virtual buffer $V$ corresponds to a contiguous virtual address range, whose physical backing is established by the OS (e.g., $VA1 \rightarrow PA1$). 
Once established, changing the underlying physical memory requires the OS to update the page table entries—unmapping the original physical pages ($PA1$) and remapping the virtual pages to point to new physical pages ($PA2$). 
Crucially, this hardware-level remapping operation is strictly constrained by page granularity (A2).


\begin{theorem}
For any operating systems satisfying A1 and A2, G1 and G2 cannot be satisfied simultaneously.
\label{thm:impossibility}
\end{theorem}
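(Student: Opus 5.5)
We argue by contradiction: assume some operating system satisfies A1 and A2 and achieves both G1 and G2. The plan is to exhibit one admissible \texttt{recv} call, permitted by G2, for which the standard \texttt{recv} paradigm forces a violation of G1.

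First we pin down the timing. By G2 the buffer $V$ is allocated before \texttt{recv} is called, so it is already backed by $P_{init}$, which is chosen by the application's allocator. By A1 the NIC cannot hold $N$ until it learns where $V$ lives. It must DMA $N$ into pre-allocated host memory $P_{NIC}$ as soon as the packet arrives, which in general is before, or independent of, the \texttt{recv} call that names $V$. Hence $P_{NIC}$ is fixed by the kernel's receive-ring allocation and not by $V$. In general $P_{NIC}\neq P_{init}$: the kernel cannot know in advance which application buffer, at which offset and length, will later consume these bytes.

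Next, the standard \texttt{recv} paradigm requires that on return $V$ map to a region $P$ containing $N$, and G1 forces $P\equiv P_{NIC}$. Since $V$ was mapped to $P_{init}\neq P_{NIC}$, the only way to meet this without copying is to change $V$'s page-table entries so that $V$ points to $P_{NIC}$. By A2 such a remapping is possible only if $V$ is page-aligned and $L$ is a multiple of the page size. Now we use Flexible Buffering in G2 to pick a legitimate call with $L$ not a multiple of the page size, for example $L=100$ bytes or $L=\text{page size}+1$. For this call the remap is forbidden, because it would clobber adjacent data sharing the same physical page. So the data must reach $P_{init}$ by some memory-to-memory transfer, CPU or DMA, from $P_{NIC}$. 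That creates a second physical copy of $N$ and contradicts G1.

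The main obstacle is the step asserting $P_{NIC}\neq P_{init}$, since one could object that the kernel might pre-post $P_{init}$ as the DMA target, as pre-registered-memory designs do. We handle this by arguing that A1 and G2 together preclude it. The NIC must place packets immediately, while G2 allows the application to allocate $V$, and to choose its offset and length, arbitrarily and at any time. The kernel therefore cannot have pre-posted exactly $V$'s physical pages for a future, unknown call. Even in a lucky case the same counterexample still works: take two consecutive \texttt{recv} calls whose lengths split a packet mid-page, or a buffer whose start is unaligned relative to where the header/payload boundary lands. Then at least one byte cannot be in place without movement. We would state this carefully as ``there exists an admissible application behavior for which copying is necessary'', which suffices for impossibility.
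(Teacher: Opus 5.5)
Your proposal is correct and follows essentially the paper's argument. The paper also splits on $P_{NIC}=P_{init}$ versus $P_{NIC}\neq P_{init}$: it rules out the first with the timing scenario in which $N$ arrives before $V$ is even allocated (your ``future, unknown call'' point), and the second with the A2 page-granularity obstruction for a non-page-multiple $L$, as you do. One small tightening: \texttt{malloc} alone does not guarantee that $V$ is backed by $P_{init}$, so, as the paper does, have the admissible application touch $V$ before calling \texttt{recv}; your extra mid-page-split fallback is then unnecessary.
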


\textbf{Proof of Theorem:}
Assume the application allocates a virtual buffer $V$ for receiving data, whose initial physical backing is denoted $P_{init}$.
Based on A1, $N$ must be initially DMAed into some physical region $P_{NIC}$. There are only two possible targets for this placement:

\textbf{Case 1: $P_{NIC} = P_{init}$ (NIC DMA $\rightarrow$ application buffer’s physical page).} \\
If the NIC is to DMA the payload $N$ directly into the physical pages backing the virtual buffer $V$, a fundamental tension arises with G2, which permits the application to allocate $V$ no later than the \texttt{recv} call. Crucially, standard POSIX semantics allow the payload $N$ to arrive \textit{before} the buffer $V$ even exists, as illustrated below:

\vspace{0.5em}

\begin{lstlisting}[language=C,
    frame=single,
    basicstyle=\footnotesize\ttfamily,  
    columns=fullflexible,                
    keepspaces=true,                     
    breaklines=true,                     
    breakatwhitespace=true,              
    escapeinside={(*}{*)},
    xleftmargin=0.5em,                   
    framexleftmargin=0.5em]
// Connection established
int proxy_fd = accept(listening_fd, ...); 

// The incoming payload N may arrive during this delay,
// while being buffered in non-application memory.
sleep(1); 

// Buffer V is allocated after N has already arrived (G2).
void *buffer = malloc(1024); 

// N is successfully read into V.
recv(proxy_fd, buffer, sizeof(buffer), 0); 
\end{lstlisting}

Since $V$ and its physical backing do not yet exist at the moment of $N$'s arrival, $N$ has no pre-allocated destination for direct placement, thereby violating A1. Otherwise, to resolve this temporal paradox, the system would have to force the application to pre-register buffers before any data arrives, which violates G2.

\textbf{Case 2: $P_{NIC} \neq P_{init}$ (NIC DMA $\rightarrow$ non-application physical page).} \\
If the NIC DMAs $N$ into $P_{\text{NIC}}$, a region in non-application memory (e.g., a kernel page pool). The application execution flow is illustrated below:

\begin{lstlisting}[language=C,
    frame=single,
    basicstyle=\footnotesize\ttfamily,  
    columns=fullflexible,                
    keepspaces=true,                     
    breaklines=true,                     
    breakatwhitespace=true,              
    escapeinside={(*}{*)},
    xleftmargin=0.5em,                   
    framexleftmargin=0.5em]
// 1. Buffer Allocation (G2): Buffer virtual address V 
// is reserved. L: Length of buffer (arbitrary)
void *buffer = malloc(L);

// 2. Intermediate Application Logic:
// Any operation (e.g., initialization or
// access to adjacent data) that causes V to be backed by
// physical pages P_init (V (*$\rightarrow$*) P_init) before recv().
do_application_logic();

// 3. Remap Attempt during recv()
// Goal: change mapping of V from P_init to P_NIC
// Remapping V (*$\rightarrow$*) P_NIC requires L % PageSize == 0
recv(fd, buffer, L, 0); 
\end{lstlisting}

As shown in the flow above, the application allocates $V$ with an arbitrary length $L$ (G2), and any intermediate application logic causes $V$ to already be backed by physical pages $P_{init}$. According to the Standard \texttt{recv} Paradigm, the OS must ensure that $V$ maps to a physical region $P$ containing $N$. Furthermore, to satisfy G1, since $N$ cannot be duplicated, $P$ must be exactly $P_{NIC}$. Consequently, the OS is forced to establish the remapping $V \rightarrow P_{NIC}$. 

However, under A2, because $V$ is already backed by $P_{init}$, this hardware remapping is impossible if $L$ is not a multiple of the page size. Since $L$ can be arbitrary (G2), it is not guaranteed to be a page multiple. To resolve this conflict, the OS must either copy $N$ to $V$'s existing physical backing $P_{init}$ (violating G1) or force the application to use strictly page-multiple buffers (violating G2). 

\textbf{Why Dedicated Pages Don’t Solve the Problem:}
One might transparently intercept all user buffer allocations to enforce page-aligned, page-sized blocks, ensuring each buffer exclusively occupies its physical pages—avoiding adjacent corruption even when $L$ is not page-aligned. Yet writing beyond $[buf, buf+L)$ during receive remains an out-of-bounds write, violating both the ISO C and POSIX standards~\cite{isoc, posix}. Forcibly padding every buffer to page granularity also causes severe internal fragmentation, rendering the approach impractical for general use.

\textbf{Takeaway \#2:} Zero-copy and POSIX compatibility are fundamentally incompatible under current OS abstractions.

\subsection{Existing Optimizations}
\label{sec:existing_optimizations}

Constrained by this limitation, all existing zero-copy optimizations fail to simultaneously achieve zero-copy and preserve compatibility. We analyze these existing efforts and categorize them as follows:

\textbf{New proxy mechanisms.}
Solutions like Prism~\cite{prism} and QDSR~\cite{qdsr} bypass the proxy for bulk data using Direct Server Return (DSR)~\cite{dsr, ananta}, while Miresga~\cite{miresga} achieves proxy bypass via programmable switches. Instead of the standard four-step flow—receive request, forward to server, receive reply, send to client—the proxy forwards the request and terminates its involvement, letting the server reply directly to the proxy's downstream (e.g., the client). This fundamentally alters the proxy’s role, requires source-code changes, and breaks compatibility. It also introduces severe cross-node coupling: DSR mandates backend modifications, while hardware offloading requires specific network devices.

\textbf{New system call mechanisms.}
Solutions such as Copier and TCP Splice optimize data transfer by introducing custom system calls~\cite{copier, tcp_splice}. Specifically, Copier replaces content-agnostic \texttt{recv} and \texttt{send} operations with new system calls that delegate them to a kernel thread, merging them into a single asynchronous copy. TCP Splicing, on the other hand, provides a mechanism to splice two sockets, enabling the kernel to forward subsequent packets between them entirely in kernel space. To leverage these mechanisms, the application must explicitly invoke the copy-bypass function, rather than using the standard API. This necessitates modifications to the application to accommodate the new system calls.
\sk{<- hard to understand.}  

\textbf{Zero copy mechanism.}
Arrakis~\cite{arrakis} uses hardware virtualization (e.g., SR-IOV~\cite{sriov}) to DMA packets directly into user-accessible memory. Similarly, RDMA~\cite{rdma} offloads the entire transport stack to hardware, allowing the NIC to DMA data straight into application buffers. Both require applications to pre-register memory with the NIC, necessitating intrusive code changes that break compatibility.

Other approaches achieve zero-copy via virtual memory remapping, such as Solaris~\cite{solaris}, FreeBSD~\cite{freebsd}, and Linux zero-copy sockets~\cite{zerocopy-recv,zerocopy-send}, which deliver packets by page remapping. SocksDirect~\cite{socksdirect} similarly preserves POSIX semantics for RDMA by receiving into a kernel page pool and remapping pages to user space—yet its zero-copy path requires application buffers to be page-aligned and sized to page multiples. In fact, all page-remapping schemes share this limitation: for arbitrarily sized buffers (the common case in our proof), they inevitably fall back to traditional copying. Moreover, these approaches require NIC support for header-data split and often mandate MTU adjustment~\cite{header_data_split}. 

Additionally, custom dataplanes like IX~\cite{IX} deliver raw packet pointers directly to applications, but force proxies to handle scattered fragments via non-POSIX APIs.

In summary, all existing approaches exhibit trade-offs between zero-copy and POSIX compatibility, consistent with our theorem.

\subsection{Motivation: L7 Proxies Often Require Only Metadata}
\label{sec:motivation}

Driven by the inherent conflict demonstrated in \S\ref{sec:proof}, we argue that physical memory copying cannot be completely eliminated while preserving strict POSIX compatibility. However, this insight suggests a shift in perspective: rather than trying to eliminate copying altogether, we should ask whether the proxy actually needs to copy the \textit{entire} payload.

To answer this question, we analyze the memory access patterns and data utilization behavior of L7 proxies. While some proxies must perform deep inspection on incoming client requests for security filtering or routing, HTTP workloads are overwhelmingly read-heavy~\cite{read_heavy}. In the dominant server-to-client response path, the proxy often simply forwards the data. For these response flows, the proxy only needs to parse a minimal amount of L7 metadata to update its internal state or manage connection mappings~\cite{copier, miresga}.

To quantify this feature, we analyze what exactly constitutes ``metadata'' for the proxy across different generations of the HTTP protocol, which dominates L7 proxy traffic, as summarized in Table~\ref{tab:http_metadata}. Each HTTP version structures messages differently; we summarize their formats below:

\noindent\textbf{HTTP/1.0 and HTTP/1.1:}  
An HTTP message comprises headers followed by a contiguous body~\cite{http1.0}. In HTTP/1.1, when chunked transfer encoding is used, the body is split into arbitrarily sized chunks, each prefixed by a short header (typically $\leq$10 bytes)~\cite{http1.1}.

\noindent\textbf{HTTP/2:}  
Messages are segmented into binary frames: a \texttt{HEADERS} frame (with HPACK-compressed headers) followed by one or more \texttt{DATA} frames (carrying the payload). Each frame begins with a 9-byte header and is at most 16 KB~\cite{http2}.

\noindent\textbf{HTTP/3:}  
Messages are framed over QUIC streams: a \texttt{HEADERS} frame (with QPACK-compressed headers)~\cite{qpack, quic}, followed by \texttt{DATA} frames, each with a small frame header and typically up to 16 KB~\cite{http3}.

\begin{table}[htbp]
\centering
\resizebox{\columnwidth}{!}{%
\begin{tabular}{@{}lp{4.5cm}p{3.5cm}@{}}
\toprule
\textbf{Protocol} & \textbf{Proxy-Required Metadata} & \textbf{Typical Length} \\ \midrule
\textbf{HTTP/1.0} & 
HTTP header 
& Hundreds of Bytes \\ \addlinespace

\textbf{HTTP/1.1} & 
HTTP header\newline 
Chunked Encoding delimiter (optional) & 
Hundreds of Bytes \newline 
$\sim$5--10 Bytes per chunk\\ \addlinespace

\textbf{HTTP/2} & Frame header\newline
HPACK HTTP header & 
9 Bytes per frame\newline 
Few--Hundreds of Bytes \\ \addlinespace

\textbf{HTTP/3} & 
Frame header\newline 
QPACK HTTP header & 
$\sim$2--8 Bytes per frame\newline 
Few--Hundreds of Bytes \\ \bottomrule
\end{tabular}%
}
\caption{HTTP Metadata Across Versions.}
\label{tab:http_metadata}
\end{table}

As shown in Table~\ref{tab:http_metadata}, the metadata a proxy actually needs to access is typically only a few bytes to a few hundred bytes. In contrast, bulk payloads are entirely unparsed and commonly range from 2 to 3\,MB in size~\cite{httparchive}. To the proxy, such massive payloads are merely an \textit{opaque byte stream} forwarded to the client.

Based on this observation, we propose \textit{selective copy}: an ideal data-path architecture for protocols with an in-kernel transport layer. On packet arrival, the kernel copies only the essential L7 metadata to user space, enabling the proxy to perform routing logic. The bulk payload remains in kernel memory and is later forwarded directly to the NIC upon send, achieving zero-copy for the data path. This selective approach still does not break our Theorem~\ref{thm:impossibility}. It achieves \textit{near zero-copy} performance across the end-to-end proxying workflow, incurring only a negligible copy of L7 metadata.

\textbf{Takeaway \#3:} L7 proxies only need to inspect a tiny fraction of data (metadata); bulk payloads are opaque and forwarded unchanged.

\subsection{Challenges and Solutions}
\label{sec:challenges_and_solutions}

To realize the ideal selective copy framework, the system must precisely intercept and manipulate the data path without violating standard socket semantics. However, this goal gives rise to the following three architectural challenges:

\noindent\textbf{C1: Identifying L7 metadata in the kernel.}  
To physically avoid copy overhead, the boundary of L7 metadata must be identified inside the kernel, immediately before physical data movement occurs. However, the OS operates at the transport layer (L4) and treats all TCP payloads as an undifferentiated byte stream. Since L7 protocols are diverse, variable in length, and dictated by proxy-specific configurations, hardcoding static L7 parsers into the monolithic kernel is fundamentally inflexible and impractical.

\noindent\textbf{S1: User-Programmable Policy Injection.}  
To bridge this gap, metadata parsing must be decoupled as a user-defined \textit{policy}. Libra leverages eBPF~\cite{ebpf,ebpf2} to safely inject protocol-specific parsing logic into the kernel's data path, enabling developers to dynamically define L7 metadata boundaries for varying workloads without modifying the kernel.

\noindent\textbf{C2: Executing Selective Copies without Breaking Proxy State.} In a typical L7 proxy, after parsing the metadata to learn the payload size, the proxy expects to forward the bulk data and then parse the next message’s metadata. If the kernel blindly skips copying all subsequent bytes after the first metadata, the next message’s metadata will also be bypassed—causing the proxy to crash or hang. Thus, selective copying must track the proxy’s L7 protocol state.

\noindent\textbf{S2: State-Driven Kernel Paradigm via eBPF State Machines.} Libra solves this with an in-kernel eBPF state machine that mirrors the proxy’s L7 logic. It parses metadata, computes payload boundaries, and dynamically controls the data path: it enables user-space copying for metadata and switches to zero-copy for the payload. Crucially, \texttt{recv} always returns the full logical message length even when only metadata is copied—ensuring transparency. Once the payload ends, eBPF resets the state for the next metadata.

\begin{figure*}[htbp]
    \centering
    \begin{subfigure}[b]{0.45\linewidth} 
        \centering
        \includegraphics[width=\linewidth]{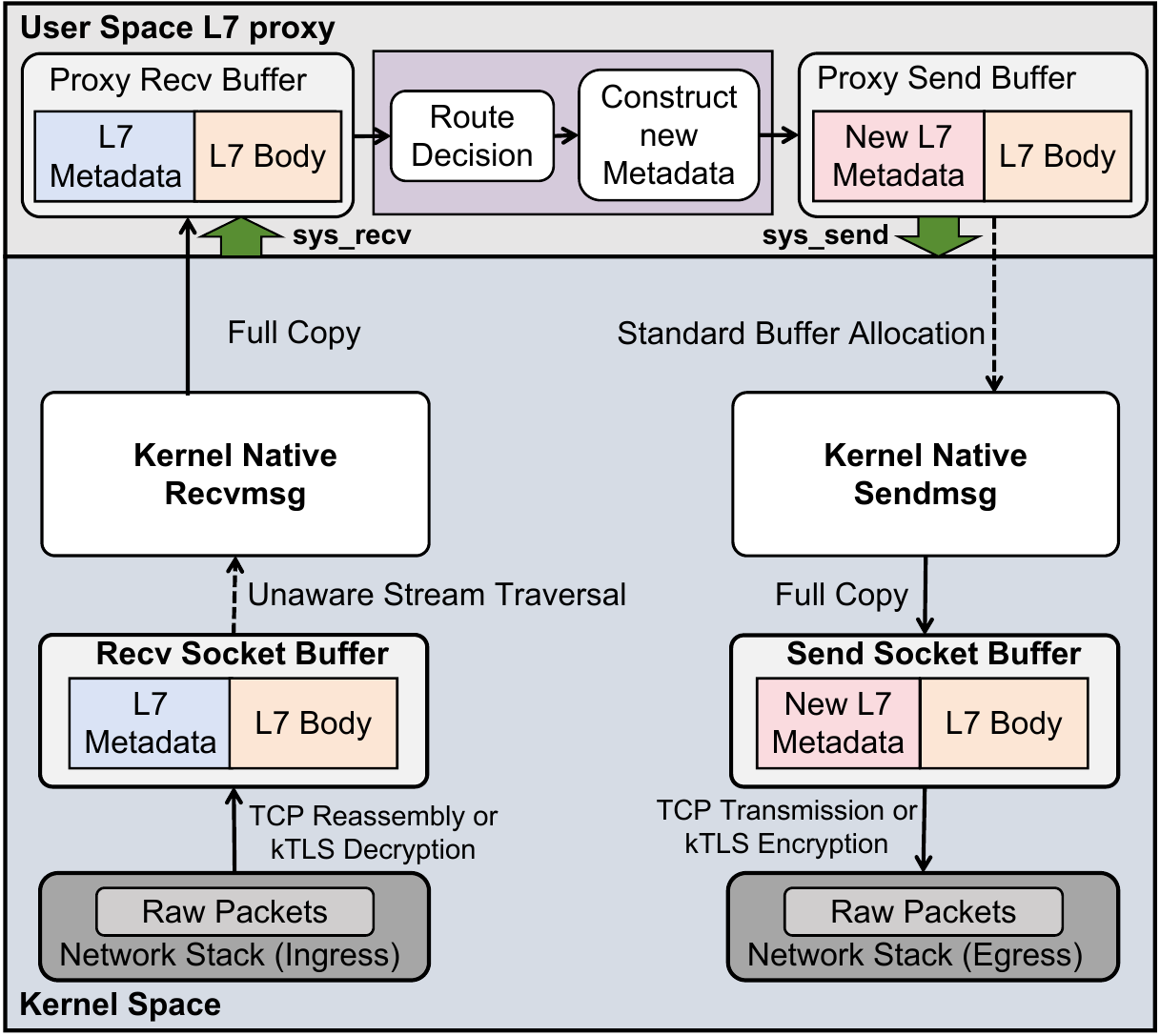}
        \caption{Overview of standard baseline.}
        \label{fig:arch_baseline}
    \end{subfigure}
    \hspace{2em}
    \begin{subfigure}[b]{0.45\linewidth}
        \centering
        \includegraphics[width=\linewidth]{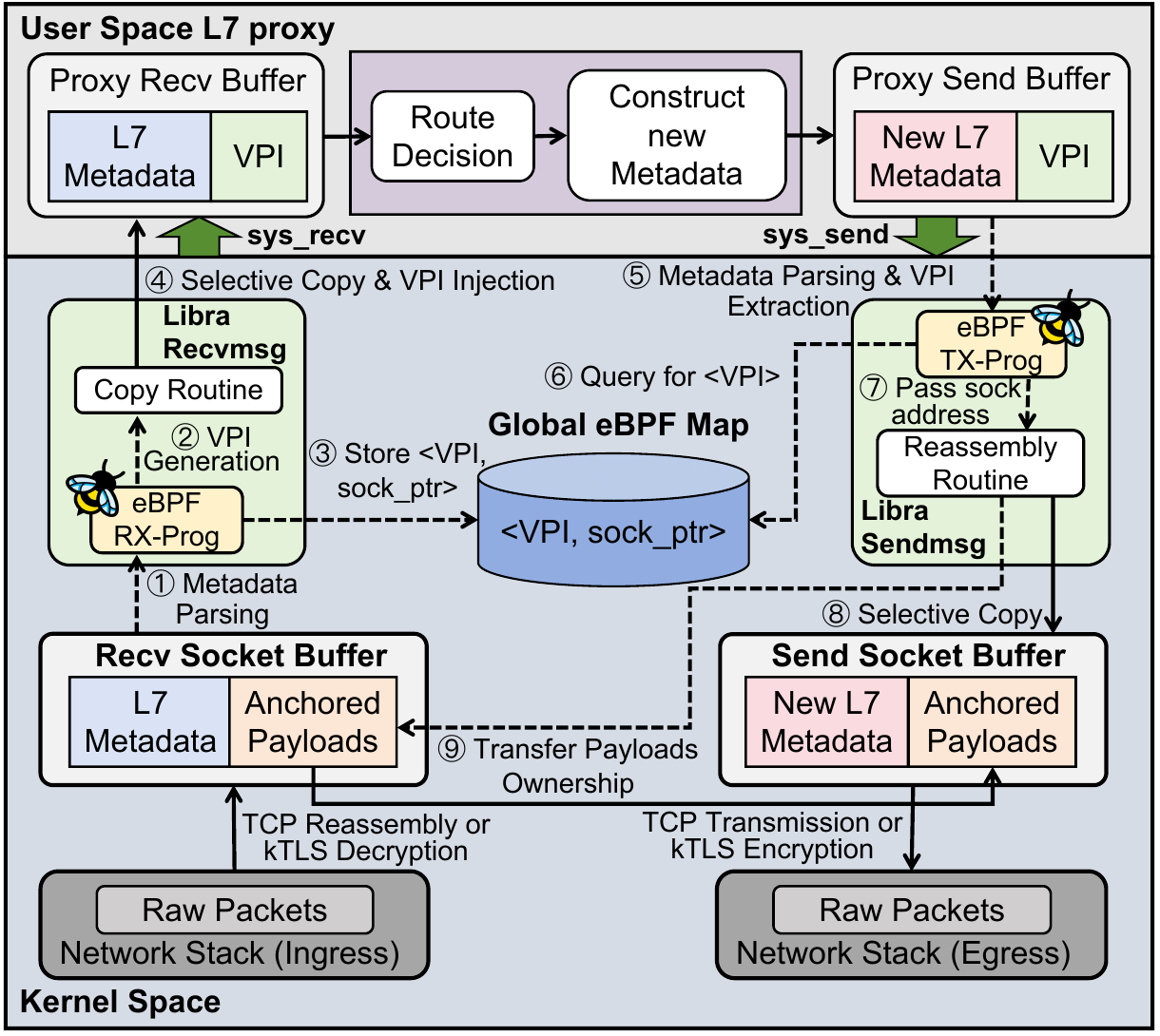}
        \caption{Overview of Libra.}
        \label{fig:arch_Libra}
    \end{subfigure}
    
    \caption{Data flow comparison between standard and Libra-based L7 proxy forwarding.}
    \label{fig:arch_overview}
\end{figure*}

\noindent\textbf{C3: Forwarding Anchored Payloads Across Sockets.}  
If the bulk payload is anchored in the kernel during receive, a problem arises at transmission: the proxy uses separate TCP connections for the client and backend, and their association exists only in user space. The kernel thus cannot map anchored payloads to the outgoing socket.

\noindent\textbf{S3: Virtual Payload Identifier (VPI).}  
To bridge the decoupled connections, Libra introduces the Virtual Payload Identifier (VPI). The Rx eBPF program replaces the payload with a small VPI in the user buffer; the proxy forwards it as opaque data via \texttt{send}. The Tx eBPF program intercepts the VPI, resolves it to the anchored kernel memory, and performs zero-copy forwarding.

\section{Design}
\label{sec:design}

\subsection{Libra Workflow}
\label{sec:architecture}

To achieve near zero-copy performance for L7 proxies while preserving POSIX compatibility, we present Libra. It separates user-defined L7 policy from in-kernel memory mechanisms by deploying two eBPF programs—\texttt{RX-Prog} and \texttt{TX-Prog}—that run a state machine synchronized with the proxy’s L7 lifecycle. This state machine dynamically identifies metadata–payload boundaries, enabling state-aware kernel primitives to physically decouple them.

Figure~\ref{fig:arch_overview}(a) illustrates the standard POSIX I/O path, where every \texttt{recv} operation triggers a full copy of the entire payload (including opaque application data) into the user space and vice versa. As shown in Figure~\ref{fig:arch_overview}(b), Libra fundamentally redesigns this datapath through two phases: Ingress anchoring and Egress reassembly.

\noindent \textbf{Ingress: Selective Copy and Payload Anchoring.} 
When the proxy issues a \texttt{recv} system call to ingest data, the \texttt{RX-Prog} is triggered from within the kernel's receive copy routine. It inspects the stream to perform L7 metadata parsing (\textcircled{1}). Once parsed, Libra performs a \textit{selective copy}, delivering only the L7 metadata to the proxy's user-space buffer. The massive opaque payload is deliberately left anchored within the kernel's receive socket buffer.

To bridge ingress and egress paths without exposing kernel addresses, \texttt{RX-Prog} generates a 64-bit Virtual Payload Identifier (VPI) (\textcircled{2}), a secure, opaque identifier to track the source socket where the original payload remains anchored. It stores the mapping \texttt{<VPI, sock\_ptr>} in a global eBPF hash map (\textcircled{3}) and injects the VPI into user space in place of the payload (\textcircled{4}). The unmodified proxy treats the VPI as a normal byte stream, while \texttt{TX-Prog} later uses it to retrieve the socket context and perform zero-copy forwarding.

\noindent \textbf{Egress: Anchored Payload Retrieval and Stream Reassembly.} 
In the user space, the proxy processes metadata, executes routing logic, and constructs a new L7 header. On \texttt{send}, it blindly writes the header followed by the unmodified VPI back to the kernel.

On the egress datapath, \texttt{TX-Prog} intercepts the stream in the transmit copy routine, parses the new metadata, and extracts the embedded VPI (\textcircled{5}). It queries the global eBPF map with this VPI (\textcircled{6}) to retrieve the original receive socket’s \texttt{sock\_ptr} (\textcircled{7}), then copies only the newly generated metadata from user space (\textcircled{8}). Finally, Libra performs a zero-copy pointer manipulation—transferring payload ownership (\textcircled{9})—to combine the metadata with the kernel-anchored payload, completing reassembly and forwarding.

To orchestrate this workflow safely and efficiently, Libra relies on three core components: VPI tracking mechanism, ingress decoupling logic, and egress reassembly routine, which we detail in the remainder of this section. Mechanisms ensuring safety guarantees and performance optimizations are deferred to Appendix~\ref{sec:others}.

\subsection{The VPI: Tracking Anchored Payloads}
\label{sec:design_vpi}
As established in \S\ref{sec:challenges_and_solutions}, tracking anchored payloads across disconnected receive and transmit sockets is critical to reducing copies. Libra bridges this gap with the \textbf{Virtual Payload Identifier (VPI)}, a secure, position-independent abstraction that remains opaque to the application.

\noindent\textbf{Position-Independent Tracking.} 
The VPI is a 64-bit logical placeholder embedded in the user-space buffer. To an unmodified proxy, it is indistinguishable from payload bytes, surviving any internal buffer shuffling or \texttt{memcpy}. When the proxy issues \texttt{send}, the VPI is included in the outbound stream, allowing Libra to maintain the tracking chain.

\noindent\textbf{Secure Mapping.} 
To prevent leaking kernel memory layouts and violating Kernel Address Space Layout Randomization (KASLR)~\cite{kaslr}, the VPI is never exposed as a raw physical or virtual pointer. Instead, it is generated as a cryptographically secure hash, serving as an $O(1)$ lookup key in a global eBPF map (\texttt{<VPI, struct sock*>}). 

\noindent\textbf{Safety Threshold and Admission Policy.} 
To ensure data integrity, Libra enforces a \textit{length-based admission policy}. Since the VPI occupies 8 bytes, selective zero-copy is only activated for opaque payloads $\geq 8$ bytes. Smaller payloads—where the copy overhead is negligible—are naturally handled via the standard full-copy path. This prevents the identifier from being truncated or misinterpreted across buffer boundaries, providing a robust fallback mechanism while maximizing performance for bulk data transfers.

\subsection{Ingress: State-Driven Selective Copy}
\label{sec:design_ingress}

To avoid the receive-side copy tax without breaking the proxy's L7 event loop, Libra's ingress datapath relies on a synchronized state machine, as illustrated in Figure~\ref{fig:ingress_state_machine}. The eBPF \texttt{RX-Prog} tracks the state to dictate memory operations performed in Libra's \texttt{recvmsg}. The lifecycle of an L7 message is managed across four states:

\noindent\textbf{1. DEFAULT: Metadata Parsing and Safety Fallback.}  
Each message starts in \texttt{DEFAULT}. \texttt{RX-Prog} inspects the socket’s receive queue via a bounded lookahead window (256 bytes, configurable via eBPF) from the current read offset. It uses the Knuth–Morris–Pratt (KMP) algorithm for deterministic $O(N)$ metadata parsing~\cite{kmp}. Once the L7 metadata is parsed and the body length determined, messages with payloads under 8 bytes remain in \texttt{DEFAULT} and follow the native full-copy path to avoid VPI overhead.

\noindent\textbf{2. METADATA\_PARSED: Selective Copy.}  
If metadata is parsed and the body is at least 8 bytes, \texttt{RX-Prog} checks whether the user buffer has space for the 8-byte VPI right after the metadata. If not, it transitions to \texttt{METADATA\_PARSED}, where the kernel copies only the metadata and defers VPI injection until a subsequent \texttt{recv} provides sufficient space.

\noindent\textbf{3. WRITE\_VPI: VPI Injection.}  
Once adequate buffer space is available, the state advances to \texttt{WRITE\_VPI}. In this state, if the metadata hasn’t been copied
\footnote{The eBPF state machine is evaluated atomically within a single \texttt{RX-Prog} / \texttt{TX-Prog} execution. Thus, transitions such as \texttt{DEFAULT} $\rightarrow$ \texttt{METADATA\_PARSED} $\rightarrow$ \texttt{WRITE\_VPI} can occur in one \texttt{recv} call if sufficient buffer space is available immediately. The kernel’s data-plane action is determined by the final state reached.},
the eBPF program instructs the kernel to copy it; regardless, it generates a VPI, stores the \texttt{<VPI, sock\_addr>} mapping in the global eBPF map, and injects the VPI right after the metadata, leaving the remaining buffer space intact. 
To preserve transparency, the kernel adjusts the return value of \texttt{recv} to reflect the logical message length (metadata + anchored payload), capped at the user-requested size.

\noindent\textbf{4. FAST\_PATH: Zero-Copy Consumption.}  
On subsequent \texttt{recv} calls for the same message, the connection enters the \texttt{FAST\_PATH} state. When the proxy attempts to read the bulk body, \texttt{RX-Prog} instructs the kernel to skip all physical data copies. The kernel merely advances the logical read offset to satisfy the requested length and returns the logical read value, while the payload remains physically anchored. The ingress state machine stays in this fast path until the \texttt{TX-Prog} (§\ref{sec:design_egress}) confirms that the payload has been fully transmitted and resets the state to \texttt{DEFAULT}.

\subsection{Egress: Payload Reassembly with L7 State Synchronization}
\label{sec:design_egress}

The egress data path is responsible for reassembling newly constructed L7 metadata with the anchored payload transparently. We introduce a state-driven, two-phase eBPF orchestration mechanism consisting of Pre-Send and Post-Send around Libra’s \texttt{sendmsg}, as illustrated in Figure~\ref{fig:egress_state_machine}.

\begin{figure}[t]
\centering
\includegraphics[width=0.9\columnwidth]{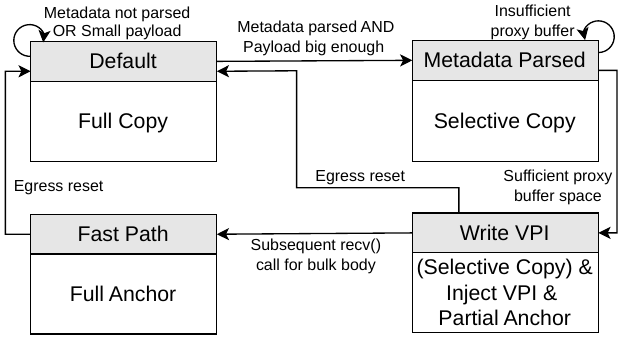}
\caption{%
  {The eBPF-driven ingress state machine.} \\
  \normalfont\small\emph{The top half denotes the eBPF control plane's logical phase, while the bottom half denotes the data plane kernel actions.}%
}
\label{fig:ingress_state_machine}
\end{figure}

\noindent\textbf{1. Pre-Send: Metadata Parsing and VPI Extraction.}  
On every \texttt{send} call, the Pre-Send eBPF hook \texttt{TX-Prog} intercepts the execution flow to infer the proxy’s logical intent. In the \texttt{DEFAULT} state, \texttt{TX-Prog} parses the outgoing user buffer to extract the L7 metadata, and performs dual length validation:  
(1) If the body is under 8 bytes, the eBPF state machine resets to \texttt{DEFAULT}, treating the transaction as ordinary short messages.
(2) If the body is sufficiently long but the available send length following the metadata contains fewer than 8 bytes, \texttt{TX-Prog} transitions to the \texttt{METADATA\_PARSED} state and defers VPI extraction until the next call.
(3) Only when both the identified body length and the available send length after the metadata are at least 8 bytes does the program extract a complete 8-byte Virtual Payload Identifier (VPI) immediately following the metadata.

Upon extracting a full 8-byte VPI, the program queries the global eBPF map. On a cache \textit{miss}—indicating no anchored payload—the system enters \texttt{FALLBACK\_BYPASS}, bypassing further L7 parsing to avoid KMP overhead and routing data via the native full-copy path. On a cache \textit{hit}, it retrieves the target \texttt{recv\_sock} and transitions directly to \texttt{FAST\_PATH}.

\noindent\textbf{2. Kernel Data Plane: Selective Copy and Zero-copy Reuse.}  
Guided by the final state of the Pre-Send eBPF program, the kernel executes the corresponding data-plane action per \texttt{send} call.  
In \texttt{METADATA\_PARSED}, the kernel copies only the L7 metadata into the \texttt{send\_sock}’s write queue.  
In \texttt{FAST\_PATH}, it first copies the uncopied metadata, then uses the retrieved \texttt{recv\_sock} pointer to transfer ownership of the anchored \texttt{sk\_buff}s into the egress stream, bypassing user space for the bulk payload.

\noindent\textbf{3. Post-Send: Application State Alignment.} 
Modern proxies often use non-blocking sockets, so a \texttt{send} call may transmit only part of the requested data. While Libra’s zero-copy mechanism ensures atomic delivery in \texttt{FAST\_PATH} (\S\ref{sec:send_sock}), \texttt{FALLBACK\_BYPASS} relies on the standard stack and remains subject to partial sends. To ensure correctness across both paths, the Post-Send eBPF hook runs after the kernel data-plane routine completes, recording the \textit{actual} transmitted bytes and updating a stateful cumulative counter—\textit{in all states except \texttt{DEFAULT}}\footnote{The \texttt{FALLBACK\_BYPASS} state is transient: once the current HTTP message is fully transmitted, Libra re-enables optimization for subsequent messages on the same connection. Hence, even in fallback mode, we track the cumulative send progress to correctly detect message completion and restore the optimized path.}.

Once the cumulative byte count matches the full message length (e.g., from \texttt{Content-Length}), the transmission is deemed complete. The Post-Send \texttt{TX-Prog} detects the request completion and triggers a synchronous cross-data-path cleanup—deleting the VPI map entry and resetting both TX and RX eBPF state machines to the \texttt{DEFAULT} state. This mechanism ensures perfect synchronization between the kernel’s internal state and the proxy’s request lifecycle, preparing the system for the next request over the reused connection.

\begin{figure}[t]
\centering
\includegraphics[width=0.9\columnwidth]{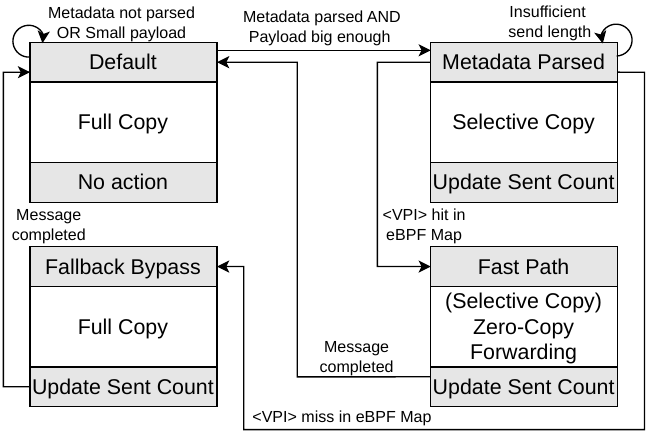}
\caption{%
  {The eBPF-driven egress state machine.} \\
  \normalfont\small\emph{The top part denotes the Pre-Send eBPF control plane logic; the middle part indicates the data plane action executed by the kernel; and the bottom part denotes the Post-Send eBPF control plane logic.}%
}
\label{fig:egress_state_machine}
\end{figure}

\subsection{Integration with Kernel TLS}
\label{sec:design_tls}

As TLS/SSL is now essential for secure communication, supporting encrypted streams is essential for any modern zero-copy architecture. Libra integrates with Kernel TLS (kTLS)~\cite{ktls}. By offloading symmetric crypto from user-space libraries (e.g., OpenSSL~\cite{openssl}) into the kernel, kTLS not only eliminates context switches but also exposes decrypted plaintext directly to our instrumented routines (see Figure.~\ref{fig:arch_Libra}).

\noindent\textbf{1. Ingress Visibility.}  
For inbound HTTPS traffic, kTLS decrypts the ciphertext before copying data to the user space. Since \texttt{RX-Prog} intercepts the stream after decryption, it operates identically to the plaintext case.

\noindent\textbf{2. Egress Encryption and Splicing.}  
To preserve zero-copy integrity during encryption, Libra adapts to both execution modes of kTLS:

\noindent\textbf{Software Mode:} To avoid copying plaintext, our specialized routine maps the pages of the anchored payload into a \textit{scatter-gather list} (i.e., \texttt{bvec}). This list is submitted directly to the kernel’s crypto engine for in-place encryption.

\noindent\textbf{Hardware Offload Mode:} For NIC-offloaded TLS, Libra performs a direct ownership transfer of the anchored \texttt{sk\_buff}s. Our logic dynamically re-frames each \texttt{sk\_buff} by carving out the necessary headroom and tailroom required by the NIC to inline-append TLS headers and cryptographic authentication tags (e.g., GHASH in AES-GCM).

This integration ensures that Libra’s performance benefits extend to encrypted web traffic.

\section{Evaluation}
\label{sec:evaluation}
In this section, we aim to demonstrate Libra's effectiveness in accelerating L7 proxies and to validate its architectural benefits. Specifically, we seek to answer the following core questions:

\noindent\textbf{• Q1 (Real-world Application Impact):} Can Libra accelerate unmodified L7 proxies, including those using kTLS, and match kernel-bypass systems like DPDK?

\noindent\textbf{• Q2 (State-of-the-Art Comparison):} How does Libra compare against the state-of-the-art baseline?

\noindent\textbf{• Q3 (Micro-architectural Efficiency):} What are the fundamental physical benefits (e.g., TLB cache preservation) of Libra, and what is the exact overhead of it?

We evaluate Libra on a three-node testbed consisting of a client, a proxy, and a backend server. Each machine is equipped with an Intel Xeon Silver 4110 processor~\cite{intel_xeon_4110} operating at 2.10 GHz and 128 GB of DDR4 memory. The nodes are connected via a 100 Gbps network. The proxy node is equipped with a Mellanox ConnectX-6 Dx (integrated in BlueField-2)~\cite{cx6dx}. The client and server nodes are equipped with Mellanox ConnectX-5~\cite{cx5}. 

The backend runs Nginx with \texttt{sendfile} enabled~\cite{sendfile} and is pinned to 4 CPU cores; client workloads are generated by \texttt{wrk}~\cite{wrk}. The proxy—whether Libra or baseline—is pinned to a single CPU core. By default, the proxy node runs Linux kernel v6.11.0, while client and server nodes run v4.15.0-213-generic. For state-of-the-art comparisons, the proxy node uses v5.15.131. We retain all default OS and NIC settings (e.g., MTU, hardware offloads) unless otherwise noted. Libra is implemented as a dynamically linked library and injected via \texttt{LD\_PRELOAD} to intercept socket system calls without application modifications~\cite{ld_preload}.

\noindent\textbf{Baselines.}
We compare Libra with the standard Linux network stack, the DPDK-based F-Stack~\cite{fstack}, and Copier~\cite{copier}. We specifically select Copier as our primary state-of-the-art comparison because it is a leading in-kernel framework that optimizes data copying without hardware dependencies and with minimal application changes. Many high-performance alternatives require infrastructure modifications or intrusive logic updates. In contrast, Copier operates entirely within the standard OS-level proxy datapath, aligning closely with our design goals and enabling a fair comparison under realistic deployment conditions.

\begin{figure*}[t]
\centering

\begin{subfigure}{0.49\textwidth}
  \centering
  \includegraphics[width=\linewidth]{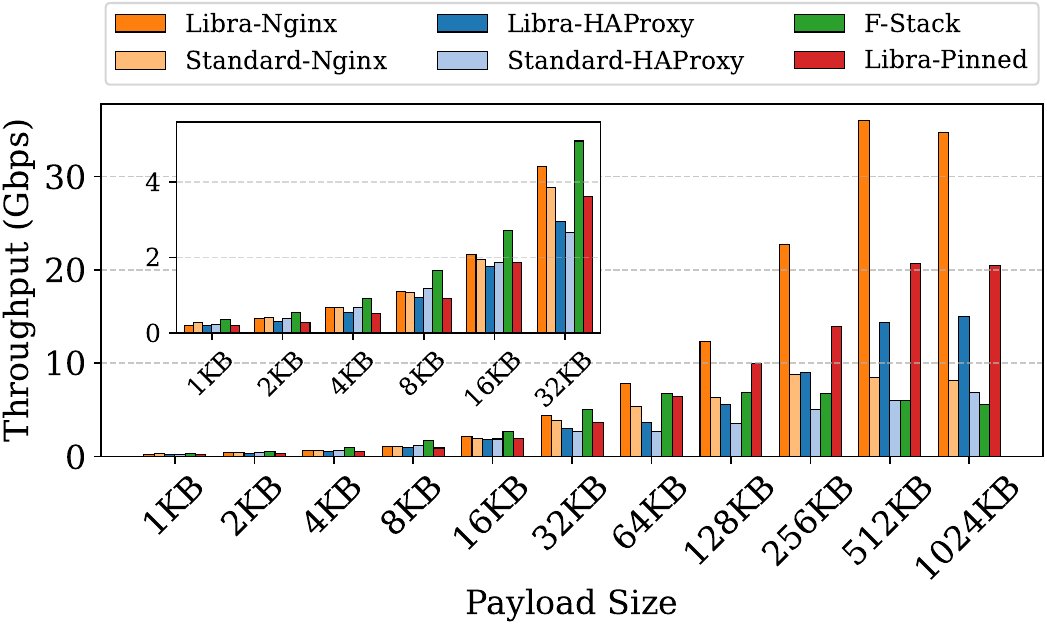}
  \caption{Nginx/HAProxy: Throughput}
  \label{fig:throughput_barchart}
\end{subfigure}
\hfill
\begin{subfigure}{0.49\textwidth}
  \centering
  \includegraphics[width=\linewidth]{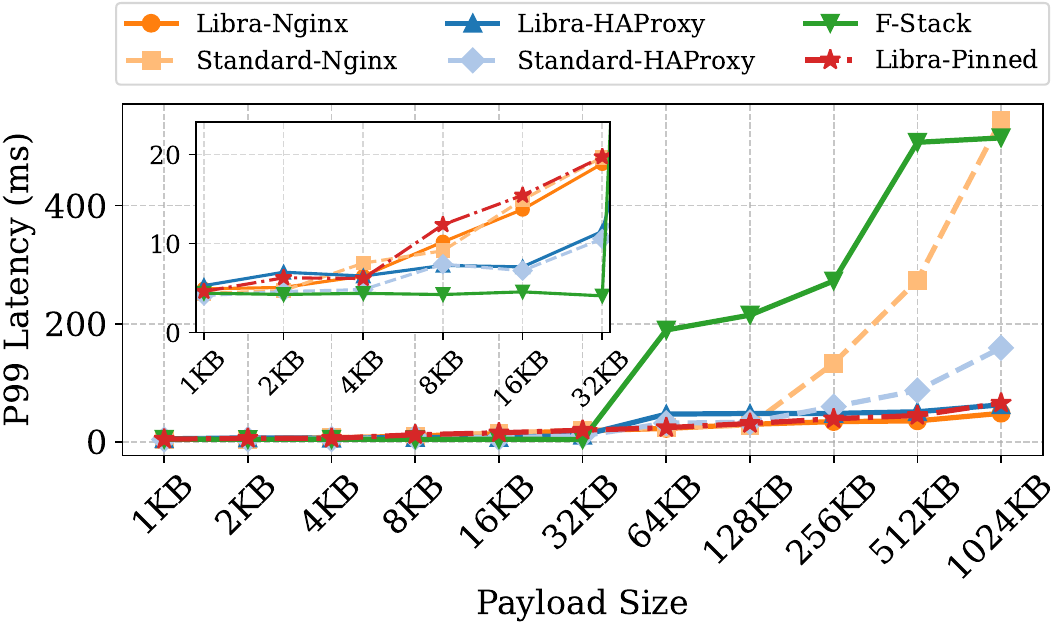}
  \caption{Nginx/HAProxy: P99 Latency}
  \label{fig:latency_p99_linechart}
\end{subfigure}

\vspace{0.3em} 

\begin{subfigure}{0.32\textwidth}
  \centering
  \includegraphics[width=\linewidth]{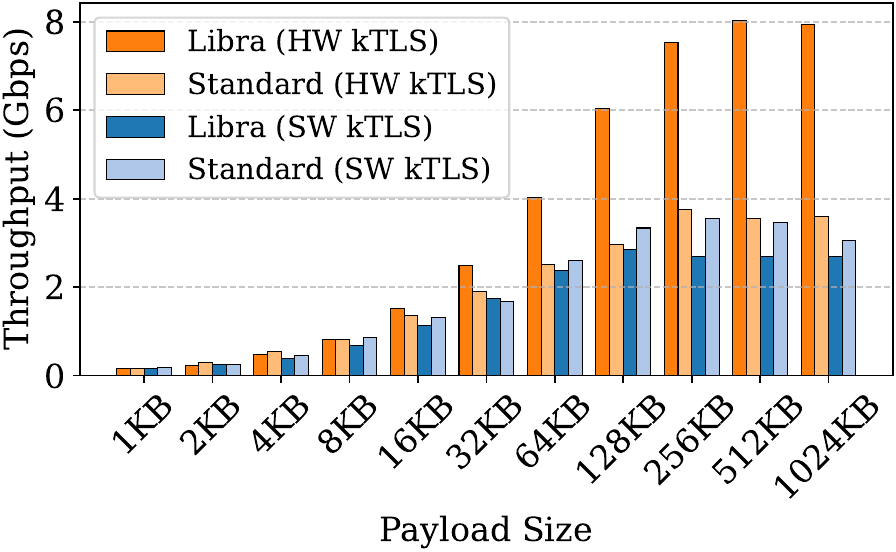}
  \caption{kTLS: Throughput}
  \label{fig:ktls_throughput}
\end{subfigure}
\hfill
\begin{subfigure}{0.32\textwidth}
  \centering
  \includegraphics[width=\linewidth]{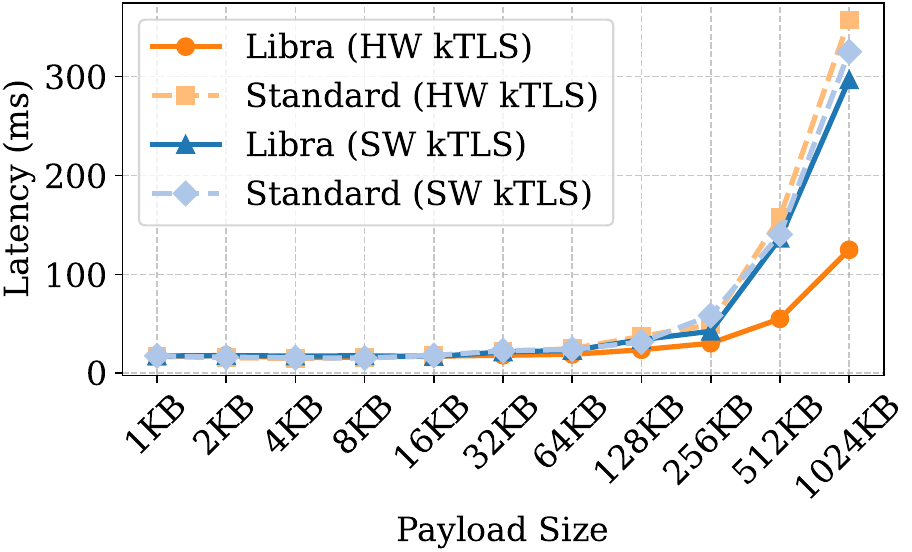}
  \caption{kTLS: P99 Latency}
  \label{fig:ktls_latency}
\end{subfigure}
\hfill
\begin{subfigure}{0.32\textwidth}  
  \centering
  \includegraphics[width=\linewidth]{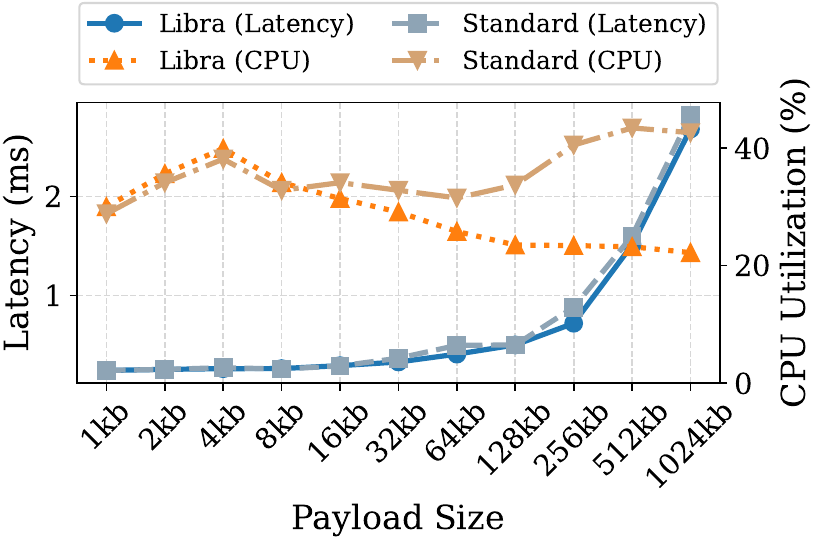}
  \caption{Single-stream performance: Latency (left axis) and CPU utilization (right axis).}
  \label{fig:single_stream_eval}
\end{subfigure}

\caption{
  Performance comparison of Libra vs. standard stack (and F-Stack/kTLS variants).}
\label{fig:all_performance_comparison}
\end{figure*}

\subsection{Application Benchmarks}
\label{application}

\subsubsection{Throughput and Tail Latency}
\label{subsubsec:tpt and lat}
To reflect realistic cloud L7 proxy workloads, we evaluate Libra using two widely deployed applications, Nginx and HAProxy, under 64 concurrent connections. This concurrency level ensures that the CPU is fully saturated. We deploy Nginx and HAProxy with Libra, as well as their counterparts, using the native Linux standard API.
To compare against kernel-bypass architectures, we introduce F-Stack, a DPDK-based user-space networking stack that provides an integrated Nginx implementation. We have carefully tuned F-Stack to ensure it operates at its optimal performance state. Since F-Stack confines both packet I/O and application logic to the same CPU core, we further introduce a variant of Libra with Nginx, denoted \texttt{Libra-Pinned}, in which both the proxy worker process and Linux receive softirqs are explicitly pinned to the same core.

Figures~\ref{fig:throughput_barchart} and~\ref{fig:latency_p99_linechart} show throughput and 99th-percentile (P99) latency across payload sizes from 1\,KB to 1024\,KB.

\textbf{Throughput.}  
For small payloads ($\leq$16 KB), the overhead of metadata parsing in Libra offsets the benefit of avoiding tiny copies, resulting in throughput comparable to or lower than the standard stack. \texttt{Libra-Pinned} performs worst at this scale due to contention from softirq and application co-location. In contrast, F-Stack achieves the highest throughput by eliminating system calls and kernel context switches, outperforming all other configurations.

As payload size increases, Libra’s advantage becomes evident. Starting at 32 KB, it surpasses the standard stack and continues to grow. At 1024 KB, Libra achieves 4.2$\times$ and 2.2$\times$ higher throughput than the baseline for Nginx and HAProxy, respectively. Meanwhile, F-Stack's performance degrades with larger payloads, and is overtaken by \texttt{Libra-Pinned} at 128 KB. The \texttt{Libra-Pinned} variant further improves performance, reaching nearly 4$\times$ the throughput of F-Stack at 1024 KB.

\textbf{Tail Latency.}  
For payloads up to 128 KB, Libra exhibits P99 latency similar to the standard stack. However, as payload size grows beyond 256 KB, proxies using the standard stack experience rising tail latency due to increased data copying. In contrast, Libra maintains stable, payload-independent processing time. At 1024 KB, it reduces P99 latency by 91\% compared to standard Nginx and by over 60\% compared to standard HAProxy.

F-Stack achieves the lowest tail latency for small payloads, maintaining values near 4 ms. At 32 KB, its P99 latency is less than 20\% of that of \texttt{Libra-Pinned}. However, its tail latency spikes with increasing payload size, while \texttt{Libra-Pinned} still sustains consistently low latency. At 1024 KB, it achieves only 13\% of F-Stack’s P99 latency.

\textbf{Why DPDK performs worse.}
This inversion stems from F-Stack’s fragmented memory model: large payloads must be split across multiple fixed-size buffers (typically $2$\,KB). Because data cannot be processed in a contiguous block, F-Stack incurs repeated per-buffer operations, preventing efficient use of CPU pipelining. This leads to significant overhead and degraded performance as payload size grows.

\subsubsection{Performance under kTLS with Nginx}
\label{subsubsec:ktls}

We evaluate Libra in a widely deployed TLS termination scenario: clients connect to the proxy over HTTPS, and the proxy forwards requests to backend servers via HTTP~\cite{prism, qdsr}. We use ECDHE-RSA-AES128-GCM-SHA256 as the cipher suite and test both software (SW) and hardware (HW) kTLS configurations. Since Nginx is currently the only mainstream L7 proxy with native kTLS support, we focus on it in this evaluation.
Figure~\ref{fig:ktls_throughput} and Figure~\ref{fig:ktls_latency} present throughput and 99th-percentile latency across payload sizes from 1\,KB to 1024\,KB under 64 concurrent connections.

\textbf{Throughput.} In software kTLS (SW) mode, Libra achieves lower throughput than the standard stack across all payload sizes. This performance degradation is due to data fragmentation inherent in our zero-copy mechanism, which reduces AES-NI efficiency~\cite{aes-ni}. We defer the detailed architectural analysis of this SW kTLS bottleneck to Appendix~\ref{sec:appendix_sw_ktls}.

In hardware kTLS (HW), Libra outperforms the standard stack starting at 16\,KB, with speedup growing to 2.15$\times$ at 1024\,KB. Although encryption is offloaded to the NIC, the standard stack remains bottlenecked by memory copying, eliminated by Libra to fully realize hardware acceleration. The gain is slightly lower than in plaintext due to Nginx’s 16\,KB TLS record limit, which requires a separate send syscall per record and offsets zero-copy benefits.

\textbf{Tail Latency.} As shown in Figure~\ref{fig:ktls_latency}, Libra reduces P99 latency under HW kTLS, with the benefit becoming pronounced for larger payloads. At 1024\,KB, it achieves a 65.0\% reduction in P99 latency compared to the standard stack.
In contrast, under SW kTLS, Libra exhibits higher P99 latency than the standard stack at large payloads. This aligns with the throughput trend and is similarly driven by the aforementioned software encryption inefficiencies. 

In particular, in the standard stack, the performance gap between software (SW) and hardware (HW) kTLS is minimal. This is because, although HW offloads encryption, its benefit is offset by costly send-side copying and more fragmented memory allocation, which increases overhead. A detailed analysis is provided in Appendix~\ref{sec:hw-sw}. Libra breaks this barrier, unlocking the true advantage of HW kTLS.

\begin{figure*}[t]
\centering
\begin{minipage}[b]{0.32\textwidth}
    \centering
    \includegraphics[width=0.95\linewidth]{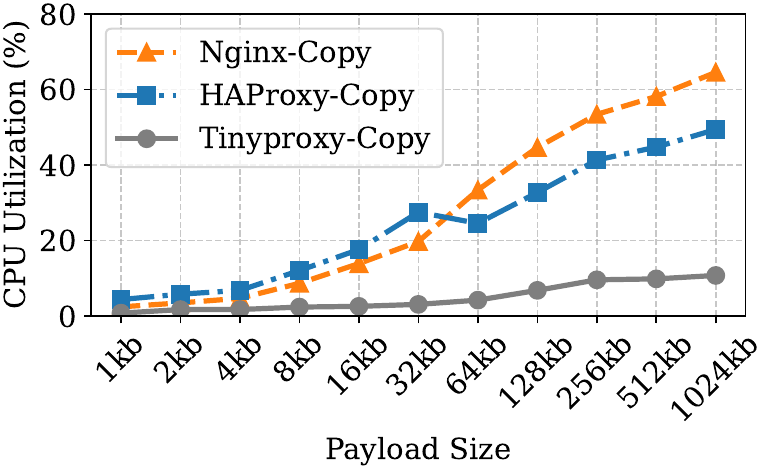}
    \captionsetup{skip=18pt, position=bottom}
    \caption{Percentage of CPU cycles spent on data copying across different proxies.}
    \label{fig:tinyproxy_overhead}
\end{minipage}%
\hfill
\begin{minipage}[b]{0.66\textwidth}
    \centering
    
    \addtocounter{figure}{1} 
    
    \begin{subfigure}[b]{0.49\linewidth}
        \includegraphics[width=0.95\linewidth]{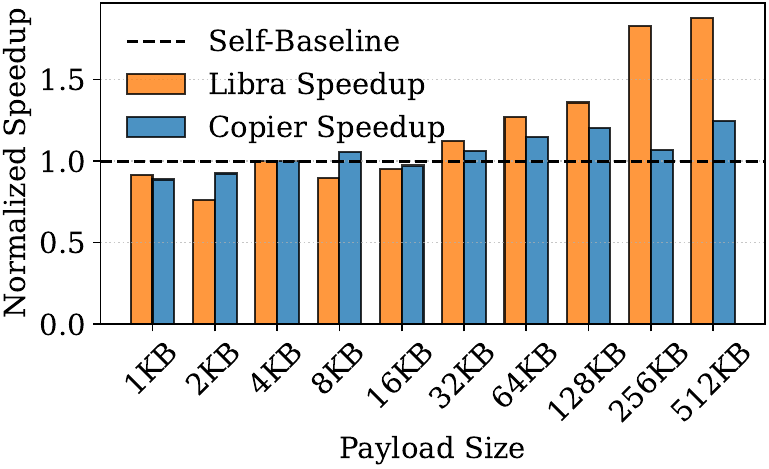}
        \caption{Normalized speedup under 8-connection.}
        \label{fig:copier_8c}
    \end{subfigure}%
    \hfill
    \begin{subfigure}[b]{0.49\linewidth}
        \includegraphics[width=0.95\linewidth]{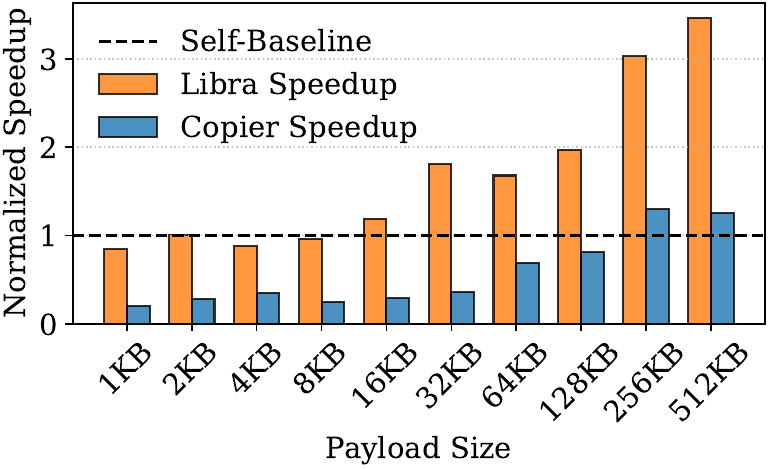}
        \caption{Normalized speedup under 64-connection.}
        \label{fig:copier_64c}
    \end{subfigure}
    
    \addtocounter{figure}{-1} 
    
    \caption{Performance comparison of Libra vs. Copier. %
    \normalfont\small\emph{Speedup normalized to each system’s baseline to account for kernel differences (Libra: Linux 6.11; Copier: 5.15).}%
    }
    \label{fig:speedup_comparison}
\end{minipage}
\end{figure*}

\subsubsection{Single-Stream Processing Efficiency} 
\label{subsubsec:single-stream}

To evaluate Libra's performance under a single-stream scenario, we conduct a single-stream benchmark using Nginx. Figure~\ref{fig:single_stream_eval} shows the average end-to-end latency (from request issuance to full response receipt) and CPU utilization of Nginx using Libra versus using the native Linux standard API across payload sizes ranging from 1\,KB to 1024\,KB.

\textbf{Latency.} As shown in Figure~\ref{fig:single_stream_eval}, Libra achieves nearly identical end-to-end latency compared to the standard stack. This is because the total latency is dominated by components outside the kernel data path—such as network propagation and server-side processing—leaving little room for the reduction in kernel copying to manifest in the end-to-end measurement.

\textbf{CPU Efficiency.} For small payloads ($\leq$8\,KB), Libra’s CPU utilization is slightly higher than that of the native Linux standard API, because the overhead of metadata parsing is fixed, while small-scale memory copying incurs almost no additional cost. However, once the payload size exceeds 16\,KB, the advantage of zero-copy begins to emerge. The standard baseline’s CPU overhead increases with payload size, while Libra’s CPU utilization gradually decreases and stabilizes. At a 1024\,KB payload, Libra reduces CPU utilization from 42.6\% to 22.2\%, a reduction of 47.9\%.

\subsection{State-of-the-Art Comparison}
\label{subsubsec:copier_comparison}

To ensure a fair and controlled comparison with Copier, we implement a custom high-performance, epoll-driven, multi-threaded L7 proxy. This is necessary because Copier’s original evaluation uses TinyProxy~\cite{tinyproxy2026}, whose bottleneck stems from inefficient user-space logic rather than data copying. We compare the proportion of CPU overhead due to copying in TinyProxy versus production-grade proxies. As shown in Figure~\ref{fig:tinyproxy_overhead}, the copy-related overhead in TinyProxy remains below 15\% across all payload sizes—significantly lower than that of Nginx and HAProxy—thereby masking the benefits of zero-copy techniques.

While Nginx and HAProxy are highly optimized, their I/O pipelines are deeply entangled with application logic, and adapting them to Copier’s APIs is complex—making it difficult to isolate zero-copy effects without confounding artifacts. Our minimal proxy cleanly separates network I/O from processing logic, enabling a precise and direct comparison with Copier.

To account for kernel version differences (Libra on Linux 6.11.0, Copier on 5.15.131), we perform baseline measurements using the standard stack on both kernels and compute normalized speedup relative to each system’s standard stack baseline to ensure a fair comparison independent of kernel-level performance variations.
Figure~\ref{fig:copier_8c} and~\ref{fig:copier_64c} show the normalized throughput speedup across payload sizes from 1\,KB to 512\,KB under 8 and 64 concurrent clients.

\begin{figure*}[t]
  \centering
  \begin{subfigure}[b]{0.3\textwidth}
    \centering
    \includegraphics[width=\linewidth]{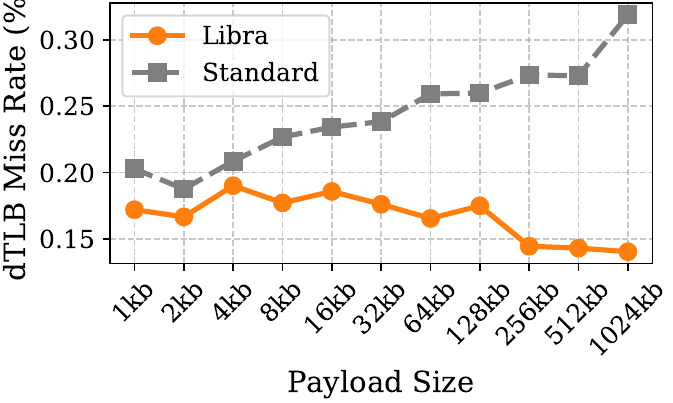}
    \caption{dTLB miss rate.}
    \label{fig:tlb_miss_rate}
  \end{subfigure}
  \hfill
  \begin{subfigure}[b]{0.36\textwidth}
    \centering
    \includegraphics[width=\linewidth]{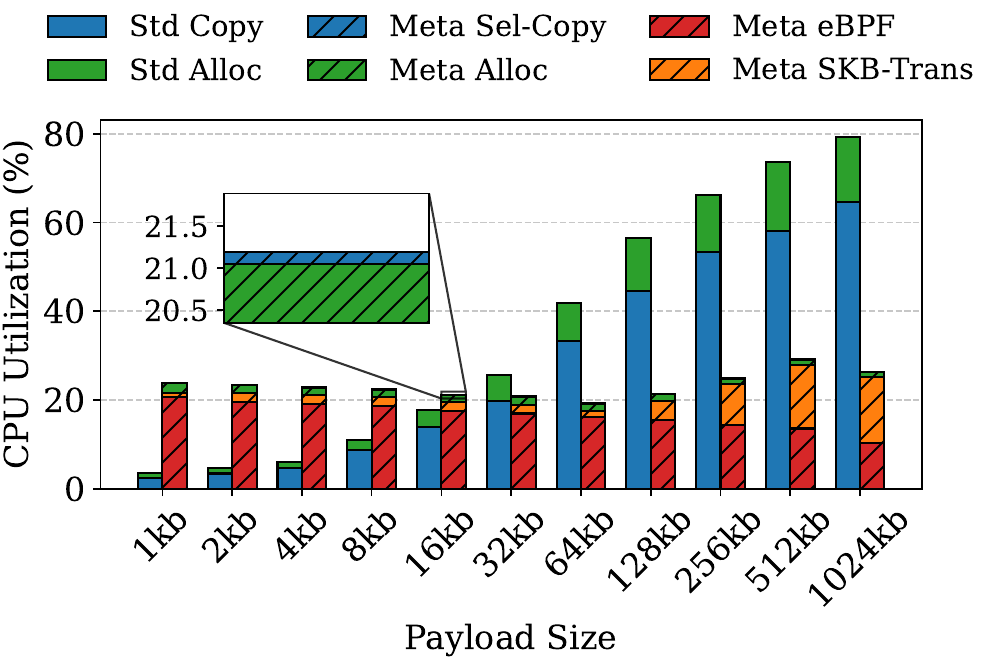}
    \caption{CPU utilization.}
    \label{fig:overhead_cpu}
  \end{subfigure}
  \hfill
  \begin{subfigure}[b]{0.31\textwidth}
    \centering
    \includegraphics[width=\linewidth]{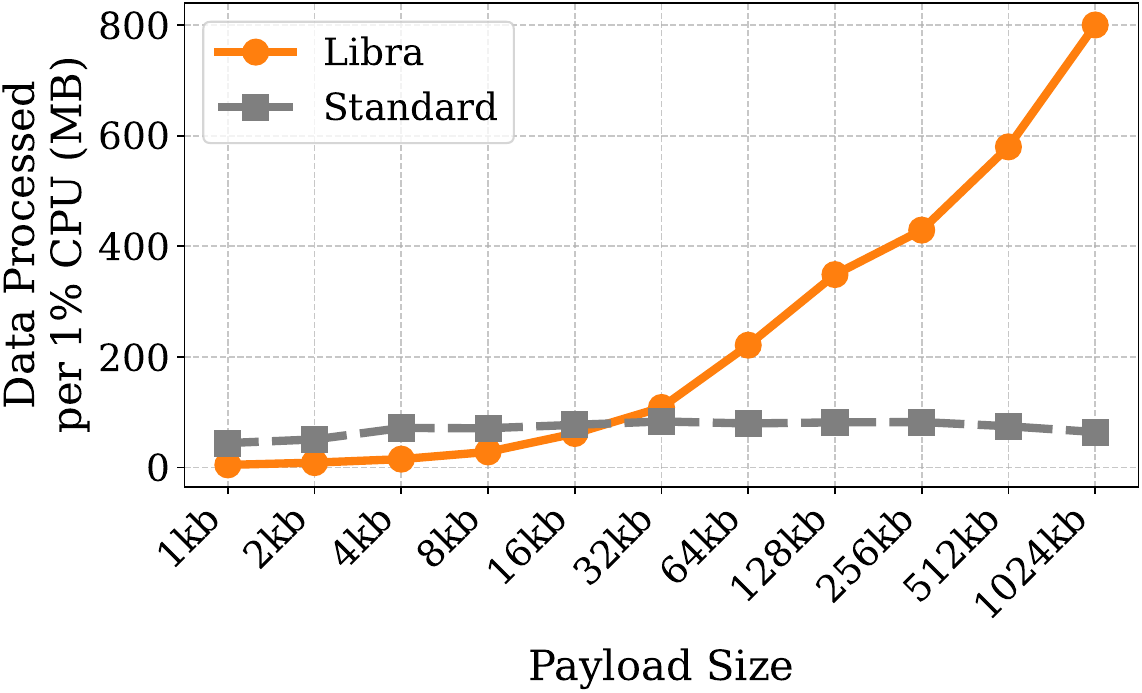}
    \caption{Data processed per 1\% CPU utilization.}
    \label{fig:overhead_efficiency}
  \end{subfigure}
  \caption{Comparative analysis between Libra and the standard stack across different performance metrics.}
  \label{fig:comparative_analysis}
\end{figure*}


\textbf{Low concurrency (8 connections).} 
Under 8 concurrent connections, both Libra and Copier show limited speedup for small payloads ($\leq$32\,KB), where baseline copy overhead is negligible. Gains emerge at 32\,KB: Libra achieves 1.11$\times$ speedup and scales to 1.97$\times$ at 512\,KB. Copier improves more modestly (e.g., 1.25$\times$ at 512\,KB) by batching \texttt{recv}/\texttt{send} into a single asynchronous copy. However, as discussed in \S\ref{sec:copies_in_modern_l7_proxies}, the dominant cost in modern L7 proxies stems from receive-side copying over fragmented skbs. Since Copier still performs a full memory copy across these fragmented buffers, it remains constrained by the bottleneck.

\textbf{High concurrency (64 connections).} 
At 64 concurrent connections, the architectural divergence becomes stark. Libra fully saturates the CPU and maintains strong speedup across all payload sizes from 16\,KB onward, reaching 3.5$\times$ at 512\,KB. In contrast, Copier’s speedup collapses, especially for small payloads (e.g., only 0.2$\times$ at 1\,KB), due to severe lock contention. Copier relies on a single kernel thread to process all copy requests, requiring synchronization on every submission. Under high concurrency, frequent \texttt{recv}/\texttt{send} calls—especially for payloads $\leq$ 128\,KB, as clients quickly consume responses and issue new requests—cause intense lock contention, severely degrading performance.

\subsection{Micro-Architectural Efficiency}
\label{sec:micro_efficiency}
To evaluate the micro-architectural impact of Libra, we run Nginx as an L7 proxy with 64 concurrent clients over 5 seconds, comparing Libra against the standard stack. We first measure dTLB miss rates, then exclude shared costs (e.g., TCP processing) to isolate the CPU overhead specific to Libra and the standard stack, respectively. Based on their isolated totals, we quantify how much data each processes per 1\% of its own stack-specific CPU overhead.

\textbf{dTLB miss rate.} 
As shown in Figure~\ref{fig:tlb_miss_rate}, Libra reduces the dTLB miss rate across all payload sizes. This stems from its elimination of redundant memory copies: by anchoring payload pages in the kernel and directly transferring skb ownership between sockets, Libra preserves virtual-to-physical address mappings throughout the packet’s lifetime. In contrast, the standard stack performs separate copies on the receive and send paths, each triggering new page walks and increasing TLB pressure.

\textbf{CPU overhead analysis.}
Figure~\ref{fig:overhead_cpu} quantifies the CPU overhead breakdown, categorized as follows:

\texttt{Std Copy}: full data copying in \texttt{recv}/\texttt{send};

\texttt{Std Alloc}: memory allocation for full payloads;

\texttt{Meta Sel-Copy}: selective copying of new L7 metadata;

\texttt{Meta Alloc}: memory allocation for new L7 metadata;

\texttt{Meta eBPF}: eBPF program;

\texttt{Meta SKB-Trans}: zero-copy reuse of anchored payloads.

Libra exhibits payload-independent overhead, remaining relatively stable between 19\% and 30\%. This overhead primarily stems from the eBPF program and the reuse of anchored payloads; the cost of selective copying is negligible, and memory allocation overhead is similarly small. In contrast, the standard stack incurs copying and memory allocation overhead that scales linearly with payload size—rising from 4\% at 1\,KB to 80\% at 1024\,KB—a cost effectively eliminated by Libra.

\textbf{CPU efficiency.}
Figure~\ref{fig:overhead_efficiency} compares the efficiency of Libra and the standard stack: at small payloads (e.g., 1\,KB), Libra is less efficient (4.6\,MB vs.\ 44\,MB per 1\% CPU), due to the fixed cost of metadata parsing dominating the total overhead. 
As payload size increases, Libra’s efficiency improves by avoiding per-byte copying, while the standard stack remains stable due to linearly scaling costs of memory allocation and data copying.
Consequently, at 1024\,KB payloads, Libra processes up to 800\,MB of data per 1\% CPU utilization, whereas the standard stack handles only 79\,MB, a more than 10$\times$ improvement.

\section{Discussion}
\label{sec:discussion}



\textbf{Extending to Multiplexed Protocols (HTTP/2)}  
While our prototype targets serial protocols like HTTP/1.1, HTTP/2 multiplexes multiple streams over a single TCP connection, making selective bypass challenging: a monolithic receive queue incurs complex parsing and head-of-line blocking. To address this, we propose extending Libra with an \textit{eBPF-driven per-stream queue}. The Virtual Payload Identifier (VPI) becomes stream-scoped, derived from the socket pointer and the HTTP/2 stream id. When the control plane detects an interleaved frame, it maps the VPI to a dedicated stream queue, and the data plane relinks the corresponding payload segments from the main connection queue without deallocation or deep copying. This preserves zero-copy anchoring while enabling efficient in-kernel stream demultiplexing.

\textbf{Generalizing to Kernel-Bypass Architectures}  
While Libra is currently implemented in the Linux kernel, its core paradigm—decoupling a lightweight, protocol-aware control plane from the data movement path—is not inherently tied to the kernel. This separation could potentially be applied to kernel-bypass frameworks with user-space network stacks~\cite{mtcp, fstack, seastar}, where opportunities for zero-copy payload handling and metadata offloading may exist.

\textbf{Limitations of Software Cryptography}
Although Libra currently underperforms the standard stack under software (SW) kTLS due to the inefficiency of AES-NI when encrypting highly fragmented \texttt{sk\_buff}s, future optimizations to AES-NI or kernel cryptographic routines specifically designed for fragmented scatter-gather lists could significantly mitigate this overhead. Moreover, the industry is gradually moving toward hardware-accelerated cryptography. As hardware-offloaded kTLS becomes increasingly ubiquitous in cloud infrastructures, Libra will be well-positioned to fully leverage this hardware trend, completely bypassing software encryption bottlenecks and thereby delivering substantial zero-copy performance gains.
\section{Conclusion}
\label{sec:conclusion}

We present Libra, an OS-level selective copy framework that achieves near zero-copy performance for Layer-7 socket I/O. We prove that true zero-copy is fundamentally incompatible with full POSIX compatibility in mainstream operating systems. Libra leverages eBPF to deliver only the control-plane metadata to the user space while anchoring the opaque bulk payload in the kernel, enabling the proxy to reuse the anchored payload on the egress path while preserving full application transparency. Our results show that this approach yields substantial performance gains for widely-used, unmodified proxy applications.

\bibliographystyle{ACM-Reference-Format}
\bibliography{main}

\clearpage
\appendix
\section{IMPLEMENTATION DETAILS}
\label{sec:others}

\subsection{Receive Socket Memory Management}
\label{sec:recv_sock}
Payload anchoring retains uncopied data in the kernel, increasing socket memory usage and causing the OS to automatically shrink the TCP receive window. Since the theoretical per-socket buffer capacity provided by the OS is relatively generous, Libra dynamically enlarges the socket’s receive buffer to maintain the advertised window size and avoid network throttling. 

To avoid OOM conditions from sustained anchoring of large unforwarded payloads, Libra enforces a hard limit on anchored payload size via a user-configurable eBPF threshold (default: 3~MB). This provides ample zero-copy capacity for the vast majority of service-to-service messages while bounding kernel memory consumption per connection. Once the threshold is reached, the eBPF program truncates the expected message length in its state map. The safely anchored prefix remains zero-copy, while subsequent data falls back to standard deep copying—safely shifting excess memory pressure to user space.

\subsection{Deadlock-Free Socket Transfers}
\label{sec:deadlock}
Zero-copy forwarding of \texttt{sk\_buff}s from one socket to another requires manipulating pointers across two socket structures. In bidirectional proxy traffic, holding locks on both sockets simultaneously to perform this transfer may lead to an AB-BA deadlock.

To fully decouple the lock acquisition order, we introduce an Intermediate Staging Queue. The kernel first acquires the lock on the receive socket, extracts the requested amount of anchored payload, and enqueues it into this intermediate queue before explicitly releasing the receive socket lock. It then acquires the lock on the send socket, performs a selective copy of the metadata, and drains the intermediate queue to transfer payload ownership into the send socket’s write queue. This phased approach guarantees deadlock-free cross-socket transfers.

\subsection{Send Socket Memory Accounting}
\label{sec:send_sock}
Our zero-copy forwarding path uses an intermediate staging queue to hold payloads pending transfer from the receive socket to the send socket, with its size strictly bounded by the requested length. However, during actual transmission, the send socket’s write queue may already contain data, and its limited send buffer may only accommodate a partial transfer—preventing the entire staged payload from being enqueued at once. Re-enqueuing the untransferred portion back into the receive socket’s queue is not viable, as it would exacerbate lock contention on the receive socket, which we explicitly aim to minimize.

The kernel imposes a per-socket send buffer limit to prevent any single connection from exhausting system memory\footnote{%
This mechanism only governs admission into the socket’s send queue based on local memory accounting. TCP congestion control and receiver-advertised flow control are enforced later in the protocol stack during actual packet transmission, and are unaffected by our temporary adjustment of the send buffer limit.%
}. This limit relies on strict memory accounting that assumes all data enqueued to a socket is owned by that socket—a valid assumption under the standard copy-based forwarding model, where data is explicitly allocated into the send buffer upon transmission.

In Libra, however, anchored payloads physically reside in receive-side buffers (e.g., NIC page pools) and are not allocated into the send socket’s memory. Nevertheless, to maintain compatibility with the kernel’s existing accounting logic, we explicitly count the staged payload against the send socket’s memory budget. Without this accounting entry, the subsequent decrement performed by the lower protocol stack upon transmission completion would cause an underflow.

Consequently, even though zero-copy payloads consume no additional kernel memory, they can still be blocked if the send socket’s write queue is logically full under the current buffer limit. To enable atomic forwarding of the entire requested payload, Libra, in \texttt{FAST\_PATH}, temporarily raises the send socket’s buffer limit by exactly the size of the staged payload. This adjustment is safe: since no new memory is allocated, there is no risk of actual memory overcommitment. The payload is then submitted directly to the network stack.

\subsection{Safe Socket Teardown}
\label{sec:close}
Zero-copy forwarding introduces a critical safety challenge: if the proxy or client invokes \texttt{sys\_close} before the \texttt{sk\_buff} ownership transfer completes, the underlying \texttt{recv\_sock} may be released while still referenced by in-flight zero-copy data, leading to a dangling pointer.

To address this, we implement a reference-count-based anchoring mechanism with deferred cleanup. The ingress side makes the anchoring decision: when the \texttt{RX-Prog} injects a virtual payload identifier (VPI) into the user-space buffer—indicating that the payload will be forwarded via zero-copy—the kernel calls \texttt{sock\_hold} to increment the reference count of the \texttt{recv\_sock}. This logical anchor is recorded in an eBPF map, ensuring the socket remains valid in kernel memory even if the application closes it prematurely.

Upon detecting a \texttt{sys\_close} event, the system checks the eBPF map for an active anchor. If present, the connection enters a \textit{Deferred Teardown} state and starts a configurable grace period (default: 5 seconds). Once the timer expires, the system resets the connection’s eBPF state machine to \texttt{DEFAULT}, removes the associated VPI mapping and anchor entry from the global eBPF map, and finally releases the socket through the standard deallocation path.

\subsection{Mitigating Context Switches via Fragment Limit Tuning}
\label{bigtcp}
Libra suffers from a hidden asymmetry between RX and TX \texttt{sk\_buff} construction. On the receive side, each incoming MSS segment (e.g., 1448 bytes) is stored in a separate page fragment due to driver and memory pool constraints. With the default \texttt{MAX\_SKB\_FRAGS} of 17, GRO can aggregate at most 17 such segments—roughly 25–26KB—into one \texttt{sk\_buff}. 

In contrast, the transmit path leverages high-order page allocations: a single fragment can span tens of kilobytes (e.g., 32KB), allowing one \texttt{sk\_buff} to carry up to the \texttt{MAX\_GSO\_SIZE} limit of 64KB. For a 1448-byte MSS, this means a single TX \texttt{sk\_buff} typically contains 45 MSS segments ($45 \times 1448 = 65{,}160$ bytes), often packed into one or two large fragments.

When Libra forwards RX \texttt{sk\_buff}s directly, it inherits their small size. Consequently, delivering the same 64KB of data requires more than twice as many \texttt{sk\_buff}s compared to native TX. Since the NIC processes transmissions per \texttt{sk\_buff}, a fast downstream client issues proportionally more \texttt{recv} syscalls, causing excessive context switches—an overhead that negates zero-copy CPU savings.

To bridge this gap, we increase \texttt{MAX\_SKB\_FRAGS} to exactly 45 while keeping \texttt{MAX\_GSO\_SIZE} at 64KB, inspired by BIG TCP~\cite{bigtcp}. This allows GRO to aggregate 45 MSS segments into a single RX \texttt{sk\_buff}, matching the payload size used by TX. The forwarded \texttt{sk\_buff} then passes egress GSO checks seamlessly and avoids unnecessary downstream context switches.

\section{ANALYSIS OF KTLS PERFORMANCE}
\label{sec:appendix_ktls}

\subsection{Why Libra Slows SW kTLS}
\label{sec:appendix_sw_ktls}

As previously observed, Libra exhibits lower throughput than the standard stack under software kTLS (SW) mode. To understand this behavior, we compare the CPU overhead of key operations while excluding shared components. Specifically, we analyze the extra overhead (\texttt{Meta Extra}) and encryption cost (\texttt{Meta AES-NI}) under Libra against the copy overhead (\texttt{Std Copy}) and the encryption cost (\texttt{Std AES-NI}) in the standard stack. As shown in Figure~\ref{fig:aes}, the total overhead in Libra is higher than the standard stack.

\begin{figure}[htbp]
   \centering
   \includegraphics[width=0.95\linewidth]{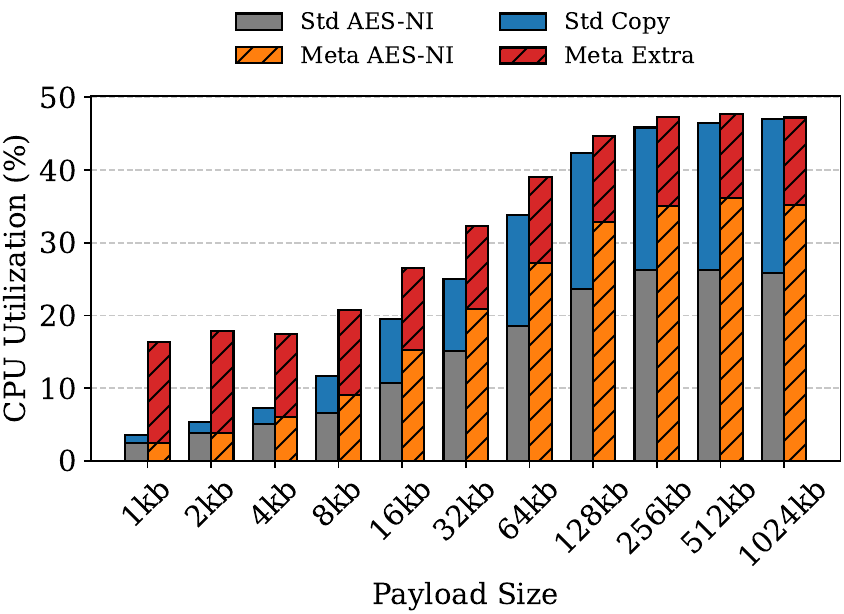}
   \caption{Analysis of CPU overhead components under software kTLS.}
   \label{fig:aes}
\end{figure}

\begin{figure}[htbp]
   \centering
   \includegraphics[width=\linewidth]{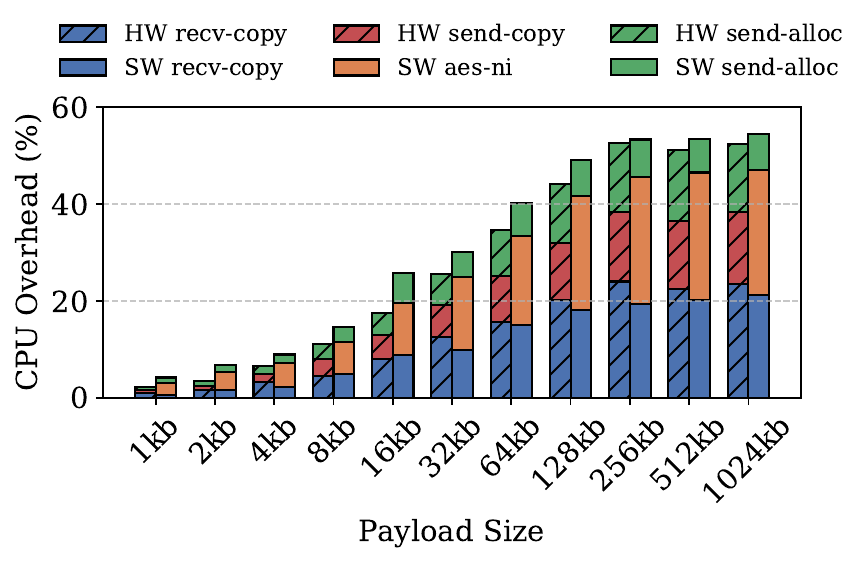}
   \caption{Analysis of CPU overhead components: kTLS HW vs. SW in the standard stack.}
   \label{fig:hw_sw}
\end{figure}

The reason is that, in the current SW kTLS implementation, CPU-based encryption effectively functions as an ``encrypt-and-copy'' operation at the hardware level. When using the AES-NI instruction set~\cite{aes-ni}, the CPU must load the plaintext into its registers, execute the encryption rounds, and write the resulting ciphertext back to a separate memory buffer. This mandatory output phase is structurally and cost-wise equivalent to a memory copy. Under existing hardware execution pipelines, this behavior cannot be bypassed by any software-level zero-copy mechanism. Consequently, Libra primarily eliminates the redundant receive-side copy (\texttt{Std Copy}), while the encryption-related copy on the send side remains unavoidable.

\subsection{Why HW kTLS Gains Little}
\label{sec:hw-sw}
Furthermore, eliminating the receive-side copy forces the AES-NI engine to process plaintext directly from the fragmented \texttt{sk\_buff} structures in the receive socket. In the standard Linux network stack, the payload is copied into a contiguous user-space buffer, enabling the AES-NI pipeline to benefit from high spatial locality and efficient hardware prefetching. In contrast, under Libra, the payload is physically scattered across non-contiguous page fragments. This fragmentation disrupts current hardware prefetching logic and induces frequent CPU stalls. We observe that this lack of contiguity significantly increases encryption cost, causing Libra’s AES-NI overhead (\texttt{Meta AES-NI}) to exceed that of the standard stack (\texttt{Std AES-NI}).

Ultimately, Libra’s performance in SW kTLS represents a direct cost trade-off: while it successfully saves CPU cycles by bypassing the receive-side memory copy, it introduces two additional penalties—the fixed overhead of Libra (\texttt{Meta Extra}) and the AES-NI fragmentation penalty.

To understand why hardware kTLS (HW) provides little performance improvement over software kTLS (SW) in the standard stack, we compare the CPU overhead that is unique to each path, excluding shared costs such as TCP processing or application logic.

In the HW path, encryption is offloaded to the NIC, so the CPU no longer executes AES-NI. It performs the following:
The HW path performs the following:
\begin{itemize}
    \item \texttt{recv-copy}: copies received plaintext from kernel buffers to user space;
    \item \texttt{send-copy}: copies outgoing data into kernel page fragments to construct TLS records;
    \item \texttt{send-alloc}: allocates memory for plaintext and TLS metadata.
\end{itemize}

In contrast, the SW path performs encryption on the CPU using AES-NI (\texttt{aes-ni}) and avoids \texttt{send-copy} entirely: it encrypts data in-place within pre-allocated kernel buffers. Its overhead, therefore, consists of:
\begin{itemize}
    \item \texttt{recv-copy}: same as HW;
    \item \texttt{aes-ni}: for encryption;
    \item \texttt{sw-send-alloc}: memory allocation for ciphertext.
\end{itemize}

Figure~\ref{fig:hw_sw} shows the total CPU overhead of these path-specific components. Across all payload sizes, the overhead gap between HW and SW remains small. The narrow difference explains why hardware offload provides little benefit in the standard stack.

This behavior stems from two key issues in the HW path, which we identified through inspection of the Linux kernel source code:

\textbf{Fragmented memory allocation due to aggressive packing:}  
HW constructs TLS records directly in page fragments and uses an aggressive packing policy that fills the current fragment to its last available byte before allocating a new page. Because the socket’s page fragment is reused across records and often starts with partial occupancy, this strategy leads to frequent, small allocation requests—even for large payloads. These repeated calls prevent the allocator from staying in its fast path, increasing \texttt{send-alloc} overhead. In contrast, SW allocates large, contiguous buffers upfront for ciphertext, resulting in far fewer allocation events.

\textbf{Uncached send-side copying in HW:}  
The \texttt{send-copy} operation in the HW path uses cache-bypassing memory writes by default\kr{cite}, which bypass the CPU cache hierarchy and incur higher per-byte CPU overhead compared to the cached copies used in plaintext transmission. Consequently, even when the total amount of copied data is the same, the send-side copy cost in HW is substantially higher.

\end{document}
\endinput